\renewcommand{\algocf@captiontext}[2]{#1\algocf@typo. \AlCapFnt{}#2} % text of caption
\def\@algocf@capt@plain{top}
\renewcommand{\algocf@makecaption}[2]{%
  \addtolength{\hsize}{\algomargin}%
  \sbox\@tempboxa{\algocf@captiontext{#1}{#2}}%
  \ifdim\wd\@tempboxa >\hsize%     % if caption is longer than a line
    \hskip .5\algomargin%
    \parbox[t]{\hsize}{\algocf@captiontext{#1}{#2}}% then caption is not centered
  \else%
    \global\@minipagefalse%
    \hbox to\hsize{\box\@tempboxa}% else caption is centered
  \fi%
  \addtolength{\hsize}{-\algomargin}%
}
\def\E{\mathbb{E}}
\def\E{\mathbbmss{E}}
\def\I{\mathbbmss{1}}
\def\tildealy{\widetilde \alpha_y}
\def\tildeald{\widetilde \alpha_d}
\def\CiD{\stackrel{d}{\longrightarrow}}
\def\CiP{\stackrel{p}{\longrightarrow}}
\def\bx{\mathbf{x}}
\def\bU{\mathbf{U}}
\def\bu{\mathbf{u}}
\def\bX{\mathbf{X}}
\begin{document}

\nolinenumbers
%% The left and right page headers are defined here:
\markboth{A. Ertefaie}{Variable Selection in Causal Inference}

%% Here are the title, author names and addresses
\title{Variable Selection in Causal Inference using a Simultaneous Penalization Method}

\author{Ashkan Ertefaie$^{1}$, Masoud Asgharian$^{2}$ and David A. Stephens$^{2}$}
\affil{$^{1}$Department of Statistics, University of Pennsylvania, Philadelphia, PA, USA\\
$^{2}$Department of Mathematics and Statistics, McGill University, Montreal, Canada
 \email{ertefaie@wharton.upenn.edu} }

\maketitle

\begin{abstract}
In the causal adjustment setting, variable selection techniques based on one of either the outcome or treatment allocation model can result in the omission of confounders, which leads to bias, or the inclusion of spurious variables, which leads to variance inflation, in the propensity score.  We propose a variable selection method based on a penalized objective function which considers the outcome and treatment assignment models simultaneously. The proposed method facilitates confounder selection in high-dimensional settings. We show that under regularity conditions our method attains the oracle property. The selected variables are used to form a doubly robust regression estimator of the treatment effect.   We show that under some conditions our method attains
the oracle property.
Simulation results are presented and economic growth data are analyzed. Specifically, we study the effect of life expectancy as a measure of population health on the average growth rate of gross domestic product per capita.
\end{abstract}

\begin{keywords}
Causal inference, Average treatment effect, Propensity score, Variable selection, Penalized estimator, Oracle estimator.
\end{keywords}

\section{Introduction} \label{intro}

In the analysis of observational data, when attempting to establish the magnitude of the causal effect of treatment (or exposure) in the presence of confounding, the practitioner is faced with certain modeling decisions that facilitate estimation.  Should one take the parametric approach, at least one of two statistical models must be proposed; (i) the \textit{conditional mean model} that models the expected outcome as a function of predictors, and (ii) the \textit{treatment allocation model} that describes the mechanism via which treatment is allocated to (or, at least, received by) individuals in the study, again as a function of the predictors \citep{rosenbaum1983central, robins2000marginal}.

Predictors that appear in both mechanisms (i) and (ii) are termed \textit{confounders}, and their omission from model (ii) is typically regarded as a serious error, as it leads to inconsistent estimators of the treatment effect. Thus practitioners usually adopt a conservative approach, and attempt to ensure that they do not omit confounders by fitting a richly parameterized treatment allocation model. The conservative approach, however, can lead to predictors of treatment allocation only -- and not outcome -- being included in the treatment allocation model. The inclusion of such ``spurious" \emph{instrumental} variables in model (ii) is usually regarded as harmless.  However, the typical forfeit for this conservatism is inflation of variance of the effect estimator \citep{greenland2008invited, schisterman2009overadjustment}.  This problem also applies to the conditional mean model, but is in practice less problematic, as practitioners seem to be more concerned with bias removal, and therefore more likely to introduce the spurious variables in model (ii).  Little formal guidance as to how the practitioner should act in this setting has been provided.

As has been conjectured and studied in simulation by \cite{brookhart2006variable}, it is plausible that judicious variable selection may lead to appreciable efficiency gains, and several approaches with this aim have been proposed.  However, confounder selection methods based on either just the treatment assignment model or just the outcome model may fail to account for non-ignorable confounders which barely predict the treatment or the outcome, respectively \citep{crainiceanu2008adjustment, belloni2014inference}: in this manuscript, we use the term \textit{weak confounder} for these variables. \cite{vansteelandt2010model} shows that confounder selection procedures based on AIC and BIC can be sub-optimal and introduce a method based on the focused information criterion (FIC) which targets the treatment effect by minimizing a prediction mean square error (see also the cross-validation method of \cite{brookhart2006semiparametric}). \cite{van2007super} introduces a \textit{Super Learner} estimator which is computed by selecting a candidate from a set of estimators obtained from different models using a cross-validation risk \citep{van2004cross, sinisi2007super}.

Bayesian adjustment for confounding (BAC) is a parametric variable selection approach introduced by \cite{wang2012bayesian}; BAC specifies a prior distribution for a set of possible models which includes a dependence parameter, $w\in [1,\infty]$, representing the odds of including a variable in the outcome model given that the same variable is in the propensity score model.  If we {\it know} a priori that a predictor of treatment is in fact a confounder, then $w$ can be set to $\infty$ \citep{crainiceanu2008adjustment, zigler2013model}. \cite{wilson2014confounder} proposes a decision-theoretic approach to confounder selection that can handle high-dimensional cases; a Bayesian regression model is fit and using the posterior credible region of the regression parameters, a set of candidate models is formed. A sparse model is then found by penalizing models that do not include confounders. This method is conservative in the sense that it may include instrumental variables that may inflate the variance of the treatment effect. Also, tuning the penalty function can be challenging.

Asymptotically, it is known that penalizing the conditional outcome model, given treatment and covariates, results in a valid variable selection strategy for causal effect estimation; however, for small to moderate sample sizes, it may result in the omission of weak confounders. The objective of this manuscript is to improve the small sample performance of the outcome penalization strategy while maintaining its asymptotic performance (Table \ref{tab:largep2}). We present a covariate selection procedure which facilitates the estimation of the treatment effect in the high-dimensional cases. Specifically, we propose a penalized  objective function which considers both covariate-treatment and covariate-outcome associations and has the ability to select even weak confounders. This objective function is used to identify the set of non-ignorable confounders and predictors of outcome; the resulting parameter estimates do not have any causal interpretation.  We derive the asymptotic properties of procedure and show that under some mild conditions the estimators have oracle properties (specifically, are consistent and asymptotically normally distributed).  We utilize the selected covariates to estimate the causal effect of interest using a doubly robust estimator.

\section{Preliminaries \& Notation} \label{sec:prelim}
In standard notation, let $Y(d)$ denote the (potential) outcome arising from treatment $d$, and let $D$ denote the treatment received. We consider for illustration the case of binary treatment.  The observed outcome, $Y$, is defined as $DY(1)+(1-D)Y(0)$.  We restrict attention here to the situation where each predictor can be classified into one of three types, and to single time-point studies.   We consider

\begin{itemize}

\item[(I)] \textit{treatment predictors} ($X_1$), which are related to treatment and not to outcome.

\item[(II)] \textit{confounders} ($X_2$), which are related to both outcome and treatment.

\item[(III)] \textit{outcome predictors} ($X_3$), which are related to outcome and not to treatment;
\end{itemize}
see the directed acyclic graph (DAG) in Figure \ref{fig:2conf}.
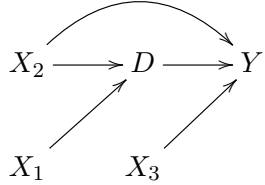
\begin{figure}[t]
\vspace{0.4 in}
\caption{ Covariate types: Type-I: $X_1$, Type-II: $X_2$ and Type-III: $X_3$.}
\vspace{0.4 in}
\centerline{
\xymatrix{
 X_2\ar[r]  \ar@/^2pc/[rr] & D \ar[r] & Y\\
  X_1 \ar[ur] & X_3 \ar[ur]\\
  }
}
\label{fig:2conf}
\end{figure}
In addition, as is usual, we will make the assumption of \textit{no unmeasured confounders}, that is, that treatment received $D$ and potential outcome to treatment $d$, $Y(d)$, are independent, given the measured predictors $X$.  In any practical situation, to facilitate causal inference, the analyst must make an assessment as to the structural nature of the relationships between the variables encoded by the DAG in Figure \ref{fig:2conf}.

\subsection{The Propensity Score for binary treatments}

The \emph{propensity score}, $\pi(.)$, for binary treatment $D$ is defined as $\pi(x) = \Pr(D=1|x)$, where $x$ is a $p$-dimensional vector of (all) covariates. In its random variable form, \cite{rosenbaum1983central} show that $\pi(X)$ is the coarsest function of covariates that exhibits the balancing property, that is, $D \perp X | \pi(X) $. As a consequence, the causal effect $\mu = \E[Y(1)-Y(0)]$ can be
computed by iterated expectation
\begin{equation}
\mu = \E_{X}[\E\{Y(1)|X\}-\E\{Y(0)|X\}] = \E_{\pi}[\E\{Y(1)|\pi\}-\E\{Y(0)|\pi\}],
\label{eq:ATE}
\end{equation}
where $\E_{\pi}$ denotes the expectation with respect to the
distribution of $\pi(X)$. For more details see \cite{rubin2008objective} and
\cite{rosenbaum2010causal}.

%The assumption of no unmeasured confounders permits us to write, for $d=0,1$, that $\E[Y(d)|X] \equiv \E[Y|D=d,X]$, so that the latter expectation can be estimated using, say, a regression of the observed $Y$ on observed $X$ and $D$. The crucial implication of the balancing property of $\pi$ is that, after conditioning on $\pi$, we again have independence of $Y(d)$ and $D$; for $d=0,1$, $\E[Y(d)|\pi] \equiv \E[Y|D=d,\pi]$, and again a regression approach can be used.

\medskip

\noindent \textbf{Remark 1:} In the standard formulation of the propensity score, no distinction is made between our three types of covariates.  Note that, however, for consistent estimation of $\mu$, \textit{it is not necessary to balance on covariates that are not confounders}. Covariates $X_1$ that predict $D$ but not $Y$ may be unbalanced in treated and untreated groups, but will not affect the estimation of the effect of $D$ on $Y$, as $D$ will be conditioned upon, thereby blocking any effect of $X_1$ \citep{de2011covariate}.  Covariates $X_3$ are unrelated to
$D$, so will by assumption be in balance in treated and untreated groups in the population. Therefore, the propensity score need only be constructed from confounding variables $X_2$.
\begin{comment}
; in this case, it is easy to see that the
propensity score, $\pi_{2}=\pi_2(x_2) $, say, is a balancing score in the sense that $D\perp X_2 \: | \: \pi_{2}$:  we have $\Pr(D=1\:|\:\pi_{2}(x_2)=t,X_2=x_2) = t =
\Pr(D=1\:|\:\pi_{2}(x_2)=t)$, independent of $x_2$, in the usual way.
\end{comment}
Then, in the presence of outcome predictors $X_3$ of $Y$, the sequel to equation \eqref{eq:ATE}
takes the form
\begin{eqnarray}
\mu = \E[Y(1)-Y(0)] & = & \E_{X_2,X_3}[\E\{Y(1)|X_2,X_3\}-\E\{Y(0)|X_2,X_3\}] \nonumber\\[6pt]
& = & \E_{\pi_{2},X_3}[\E\{Y(1)|\pi_{2},X_3\}-\E\{Y(0)|\pi_{2},X_3\}].
\label{eq:ATE2}
\end{eqnarray}

\noindent \textbf{Remark 2:} Inclusion of covariates that are just related to the outcome in the propensity score model increases the covariance between the fitted $\pi$ and $Y$, and decreases the variance of the estimated causal effect, in line with the simulation of \cite{brookhart2006variable}.  %However, it can be shown using results from linear models that including covariates that are merely related to the treatment mechanism \textit{inflates} the variance of the causal effect estimator. It suggests that the common strategy of building rich (saturated) models for the propensity score will lead to inefficiency if covariates that are not confounders are included. Moreover, \cite{de2011covariate} illustrates that bias may dominate variance when saturated propensity score models are used (see also \cite{brookhart2006variable}, \cite{vansteelandt2010model}).

%\end{comment}

\subsection{Penalized Estimation}

%Several methods for choosing the most important covariates in the regression of
%an outcome $Y$ on covariates $\bX$ in terms of outcome prediction include
%stepwise and subset selection. When the dimension of $\bX$ is large, however,
%these methods are computationally expensive and unstable, and so penalization
%methods have been investigated.

In a given parametric model, if $\alpha$ is a $r$-dimensional regression coefficient,  $p_{\lambda}(.)$ is a penalty function and $l_m(\alpha)$ is the negative log-likelihood, the maximum penalized likelihood (MPL) estimator $\widehat \alpha_{ml}$ is defined as
\begin{align*}
\widehat \alpha_{ml}=\arg \min_{\alpha} \left [l_m(\alpha)+n\sum_{j=1}^r p_{\lambda}(|\alpha_j|) \right ].
\end{align*}
MPL estimators are shrinkage estimators, and as such, they typically have more finite sample bias, though less variation than unpenalized analogues. Commonly used penalty functions include LASSO \citep{tibshirani1996regression}, SCAD
\citep{fan2001variable}, Elastic Net (EN) \citep{zou2005regularization} and HARD \citep{antoniadis1997wavelets}. %A penalized estimator has oracle properties if it sets the true zero coefficients to zero (sparsity) and behaves like the MLE for large samples. While LASSO satisfies sparsity, it does not behave like the MLE when $n \rightarrow \infty$. \cite{fan2001variable} introduced the SCAD penalty function to avoid these deficiencies.

The remainder of this paper is organized as follows. Section \ref{sec:penalized} presents our two step variable selection and estimation procedure; we establish its theoretical properties.  The performance of the proposed method is studied via simulation in Section \ref{sec:simulation}. We analyze a real data set in Section \ref{sec:application}, and Section \ref{sec:conclude} contains concluding
remarks.  All proofs are given in the Appendix.

%At the beginning of each interval, the set of covariates are observed and the treatment $A_t$ is chosen based on the policy (treatment assignment model) given the observed $H_t$. To define the reward functions, the outcome needs to be measured at the end of each time interval. For example, if our objective is to control the $d$, we need to record the $CD4_t$ for all the $t$s. 

\section{Penalization and Treatment Effect Estimation} \label{sec:penalized}

In this section, we present our proposed method for estimating the treatment effect in high-dimensional cases.  We separate the covariate selection and treatment effect estimation procedure. First, we form a penalized objective function which is used to identify the important covariates, and establish the theoretical properties of the resulting estimators (i.e., minimizers of the penalized objective function). Note that because of the special characteristics of this function the estimators do not have any causal interpretation and are used just to {\it{prioritize}} variables. Second, treatment effect estimation is performed using a doubly robust estimator with the selected covariates.  We use a simple model structure to illustrate the methodology, and assume that a random sample of size $n$ of observations of outcome, exposure and covariates is available.

In order to present the method, we initially assume that columns of $\bX$ are orthogonal; this simplifying assumption is relaxed in Section \ref{sec:non-orth}.

\subsection{Penalized objective function }

Consider the following linear outcome model under a binary exposure $d$
\[
Y=\theta d+\bx\alpha_y+\epsilon,
\]
where $\bx$ is a $1 \times r$ { standardized } covariate vector, and $\epsilon$ is standard normal residual error.  Assuming a logit model for the propensity score, we have
\[
\pi(\bx,\alpha_d)=p(D=1|\bx,\alpha_d)=\frac{\exp\{\bx\alpha_d\}}{1+\exp\{\bx\alpha_d\}}.
\]
Let $\bX$ denote the corresponding $n \times r$ design matrix.  In the formulation, $\alpha_y$ and $\alpha_d$ denote $r \times 1$ vectors of parameters and $\theta$ is the treatment effect.

First,  we form an objective function, $M(\alpha)$, in which the coefficients $\alpha_y$ and $\alpha_d$ are replaced by a single common vector of coefficients $\alpha$ that is proportional to a weighted sum of $|\alpha_y|$ and $|\alpha_d|$, where $|.|$ denotes the componentwise absolute value. This will guarantee that the objective function satisfies the following condition:

\medskip
\noindent {\it{$Argmin$ Condition:}} An element, $\hat \alpha_j$, of the minimizer $\hat \alpha$ of $M(\alpha) $ converges to zero as $n \rightarrow \infty$ if the corresponding covariate is not associated with either treatment or outcome.

%Since at the variable selection stage of the procedure, our focus is on detecting variables with non-zero coefficients and  the sign of $\alpha$ is not important, we propose an objective function which estimates the parameters     In standard linear regression where we regress $Y$ on a vector of covariates $X$ (assume $\theta=0$), the objective function $\sum_{i=1}^n [ |\bx_i y_i| - \alpha^\top (\bx_i^\top\bx_i)](\bx_i^\top \bx_i)^{-1}[|\bx_i y_i| - \alpha^\top (\bx_i^\top\bx_i)]^\top$ can be used to estimate $|\alpha_y|$. Analogous to the standard linear regression, we propose the following objective function:

\medskip

In standard linear regression where we regress $Y$ on a vector of covariates $\bX$ (with no treatment variable) that are presumed orthogonal, an example of an objective function that estimates $|\alpha_y|$ is
\[
M(\alpha) = \frac{1}{2n}\left(\left|\sum_{i=1}^n \bx_i^\top y_i \right| - n\alpha \right)^\top \left(\left|\sum_{i=1}^n \bx_i^\top y_i \right| - n\alpha \right)
\]
yielding the estimating equation and estimate
\[
\dfrac{\partial M(\alpha) }{\partial \alpha} = \left|\sum_{i=1}^n \bx_i^\top y_i \right|  - n\alpha = 0
\qquad \Longrightarrow \qquad
\widehat \alpha = \frac{1}{n} \left|\sum\limits_{i=1}^n \bx_i^\top y_i\right|.
\]
We have used the fact $n^{-1} \bx^\top \bx$ is  an identity matrix due to standardization. It is clear that as $n \longrightarrow \infty$, 
$\widehat \alpha \stackrel{p}{\longrightarrow} |\E[x_i^\top Y_i]|$
Similarly, in a binary exposure setting, $|\alpha_d|$ can be estimated using the objective function $ [-\left|\sum_{i=1}^n \bx_id_i\right|\alpha +\sum_{i=1}^n \log(1+\exp\{ \bx_i \alpha \})]$.  Now, we combine these two objective functions and show that the resulting objective function have some interesting features. First, we obtain the least squares (or ridge) estimate $\widetilde \theta$ of $\theta$ by regressing  $Y$ on $ D$ and the vector of covariates $\bX$. Because of the inclusion of spurious and instrumental variables in the model, $\widetilde \theta$  is not efficient   but is consistent for $\theta$ under the no unmeasured confounders assumption. Define $\widetilde Y=Y-\widetilde \theta D$.  Let
%  The objective function should satisfy the following condition: asymptotically, minimizing it should set $\alpha=0$, if the corresponding covariate is not associated with either of  treatment or outcome (Condition 1). Our objective function should give each covariate two chances to appear in the model; once in the outcome model and once in the treatment model and thus considers both the covariate-treatment and the covariate-outcome association. Let
\begin{align}
M(\alpha)=\frac{1}{2} \Big[\Big| \sum_{i=1}^n\bx_i \widetilde y_i\Big|- \alpha^\top  \sum_{i=1}^n(\bx_i^\top \bx_i)  \Big]  &\left(\sum_{i=1}^n\bx_i^\top \bx_i\right)^{-1} \Big[\Big| \sum_{i=1}^n\bx_i \widetilde y_i\Big|-     \alpha^\top  \sum_{i=1}^n(\bx_i^\top \bx_i) \Big]^\top+  \nonumber \\
& \frac{1}{{ \tau}} \Big[-\Big| \sum_{i=1}^n\bx_id_i\Big|\alpha + \sum_{i=1}^n\log(1+\exp\{ \bx_i \alpha \})\Big],
\label{eq:misobj}
\end{align}
{ where $\tau$ is a positive constant. Under orthogonality and standardization of $\bX$, we have
\begin{align*}
M(\alpha)=\frac{1}{2n} \Big[\Big| \sum_{i=1}^n\bx_i \widetilde y_i\Big|- n\alpha^\top  \Big]  & \Big[\Big| \sum_{i=1}^n\bx_i \widetilde y_i\Big|-     n\alpha^\top   \Big]^\top+  \nonumber \\
&  \frac{1}{\tau} \Big[-\Big| \sum_{i=1}^n\bx_id_i\Big|\alpha + \sum_{i=1}^n\log(1+\exp\{ \bx_i \alpha \})\Big]
%\label{eq:misobj}
\end{align*}
}
Note that for each $j$, the parameter $\alpha_j$ corresponding to $x_j$ is the same in both models. Now, we show that in this function the absolute values play a critical rule in satisfying the $Argmin$ condition.  Let $\hat \alpha = \arg \min_{\alpha} M(\alpha)$ and $\tildealy$ be the least squares estimate of the parameters in the outcome model. By convexity of $M(\alpha)$, $\hat \alpha$ must be a solution to $\partial M(\alpha)/\partial \alpha=0$, which implies that {
\begin{align}
\left[ n \widehat \alpha + \frac{1}{\tau} \sum_{i=1}^n \bx_i^\top \frac{\exp\{\bx_i\widehat \alpha \}}{1+\exp\{\bx_i\hat \alpha \}} \right] &=   \left|\sum_{i=1}^n \bx_i^\top \widetilde y_i\right| + \frac{1}{\tau}\left|\sum_{i=1}^n \bx_i^\top d_i\right|  =n \left|\tildealy\right| + \frac{1}{\tau}\left|\sum_{i=1}^n \bx_i^\top d_i\right| .
  \label{eq:alpha}
\end{align} }
This equation shows that $\hat \alpha=0$ (i.e. the $Argmin$ condition is satisfied) if $cov(X,Y)=cov(X,D)=\textbf{0}$. Note that $cov(X,D)=\textbf{0}$ implies  $cov(X,\widetilde Y)=cov(X,Y)$. { Moreover, using the first three terms of a Taylor expansion, we have
\[
\widehat \alpha = \frac{2\tau}{2\tau+1} \left|\tildealy\right| +\frac{1}{2\tau+1} \left|\tildeald \right|.
\]
Thus, $\hat \alpha$ is a weighted sum of $|\tildealy|$ and $|\tildeald|$ where $\tildeald$ is the  maximum likelihood estimate of the parameters in the treatment model. The constant $\tau$ controls the contribution of  components $\left|\tildealy\right|$ and $\left|\tildeald\right|$ to the estimator. For example, for $\tau=2$,  $\widehat \alpha = \frac{4}{5} \left|\tildealy\right| +\frac{1}{5} \left|\tildeald \right|$; while for $\tau=0.1$, $\widehat \alpha = \frac{0.2}{1.2} \left|\tildealy\right| +\frac{1}{1.2} \left|\tildeald \right|$. Thus, $\tau$ gives a flexibility to our objective function such that as it decreases to zero the proposed estimate $\widehat \alpha$ converges to $\left|\tildeald \right|$. See Section \ref{sec:tau}.  }

Figure \ref{fig:mislike-perf} visually presents how $\widehat \alpha_j$ behaves, for a fixed $\tau=0.5$, when $\widehat \alpha_{jy}$ converges to zero as sample size increases, i.e., the $j$th variable becomes insignificant.  Specifically, this figure presents a case where there is just one covariate (i.e., $j=1$) and the coefficient of this covariate in outcome and treatment models are $\alpha_{jy}=1/\sqrt n$ and $\alpha_{jd}=0.3$, respectively, where $n$ is the sample size. As expected, $\hat \alpha$ corresponding to this covariate does not converge to zero as sample size increases. The same behaviour would be observed if $\alpha_{jd} \rightarrow 0$ as $n \rightarrow \infty$ and $\alpha_{jy}$ is a non-zero constant.

\begin{figure}[t]
\centering
%\makebox{\includegraphics[scale=.65]{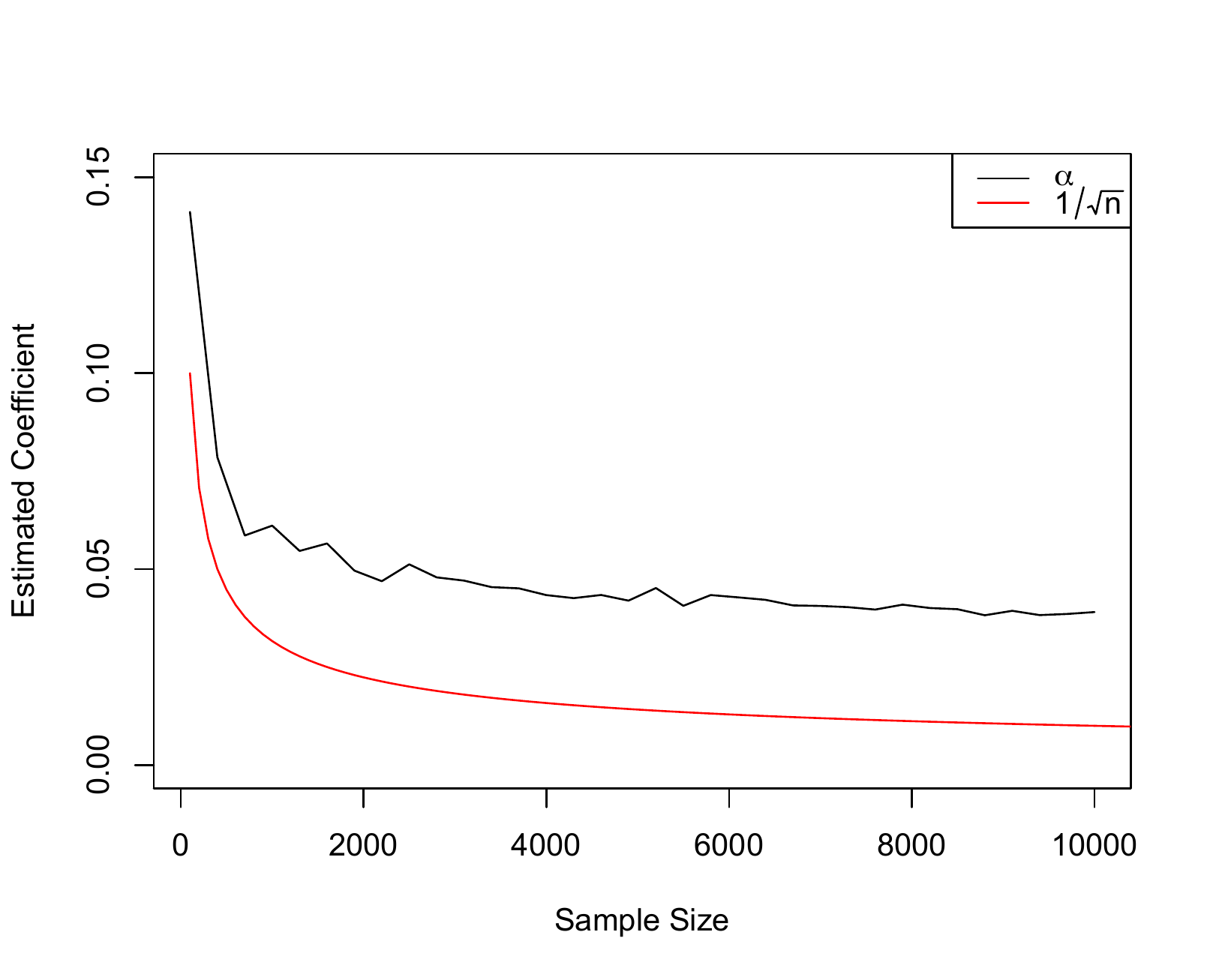}}
\makebox{\includegraphics[scale=.50]{Plot-mislike.pdf}}
\caption{ { Performance of the modified  objective function based estimator for different sample sizes $n$. Red and black  lines are $1/\sqrt n$ and the estimated coefficient $\widehat \alpha$ using the modified  objective function.  }}
\label{fig:mislike-perf}
\end{figure}

\noindent Therefore,  penalizing  objective function (\ref{eq:misobj}) results in selecting covariates that are either related to outcome (Type-III) or treatment (Type-I). However, this is against our goal of keeping variables that are either predictors of the outcome or non-ignorable confounders and excluding instrumental variables (Type-I). To deal with this problem, we present a weighted lasso penalty function that is tailored specifically for causal inference variable selection:
\[
\frac{\lambda}{( \tildealy)^2(1+| \tildeald|)^2}  \sum_{j=1}^r |\alpha_j|,
\]
where $\tildealy$ and $\tildeald$ are the least squares (or ridge) and maximum likelihood estimates of the parameters in the outcome and treatment models, respectively, and $\lambda$ is a tuning parameter.  Thus, the proposed modified  penalized objective function is given by
\begin{align}
M_p(\alpha)=M(\alpha)
+ \lambda \sum_{j=1}^r  \frac{|\alpha_j|}{( \widetilde \alpha_{jy})^2(1+| \widetilde \alpha_{jd}|)^2}
\label{eq:peemis}
\end{align}
We refer to $ \widehat\alpha=\arg \min_{\alpha} M_p(\alpha)$ as  penalized modified objective function estimators (PMOE). The magnitude of the penalty on each parameter is proportional to its contribution to the outcome and treatment model. Note that as $\hat \alpha_{jy} \rightarrow 0$, our penalty function puts more penalty on the $j$th parameter while considering the covariate-treatment association. For example, when a covariate barely predicts the outcome and treatment, our proposed penalty function imposes a stronger penalty on the parameter compared to a case where a covariate barely predicts the outcome and is strongly related to treatment. This is an important feature of the proposed penalty function which allows selecting such weak confounders for small sample sizes.  The proposed penalized estimator asymptotically selects the same covariates as the adaptive lasso on the outcome model \citep{zou2006adaptive}. Thus, the proposed method improves the small sample performance of  the outcome model penalization strategy while maintaining the asymptotic properties of this strategy.

\subsection{The role of $\tau$ in PMOE } \label{sec:tau}
{
The constant $\tau$ reflects investigators belief about the importance of including variables that are weakly(strongly) related to outcome and strongly(weakly) related to treatment, i.e., for smaller values of $\tau$, variables that are weakly related to the outcome but strongly to the treatment have more contribution to $\widehat \alpha$ and have more chance to be selected. Also, when $\tau=\infty$, the treatment mechanism does not have any contribution in the objective function (\ref{eq:misobj}). Thus, our procedure performs similarly to the outcome model based variable selection that may exclude non-ignorable confounders that are weakly related to the outcome.  Our simulation studies shed more light on the role of $\tau$ in our procedure.
}

\subsection{Main Theorem} \label{theorems}

%\subsection{Consistency and Sparsity: Assumptions}

Suppose $\alpha_0=(\alpha_{01},\alpha_{02})$ is the true parameter value of the  $r$-dimensional vector of parameters where
$\alpha_{02}=\{\alpha_j, j=s+1,...,r\} \equiv \mathbf{0}$ contains those elements of $\alpha$ that are in fact zero, so that the corresponding predictors are not confounders; $s$ denotes the true number of predictors present in the model ({\it exact sparsity} assumption).    Let $\widehat \alpha=(\widehat \alpha_1,\widehat \alpha_2)$ be the vector of estimators  corresponding to \eqref{eq:peemis}. The next theorem proves the sparsity and asymptotic normality of  the proposed penalized estimators under the following two conditions:
\begin{enumerate}
%\item[] (a) Let $I(\alpha)=\E[(\frac{\partial M(\alpha)}{\partial \alpha})^\top \frac{\partial M(\alpha)}{\partial \alpha}]$. We assume  $\det [I(\alpha)] \neq 0$ for all $\alpha \in \Xi$.
\item[] (a) Let
\[
\sum_{i=1}^n \epsilon_{i\widetilde y}= |\sum_{i=1}^n\bx_i\widetilde y_i|- n\alpha_0^\top \qquad \text{and} \qquad \sum_{i=1}^n \epsilon_{id}= |\sum_{i=1}^n d_i\bx_i|-\sum_{i=1}^n \bx_i\frac{\exp\{\bx_i\alpha_0 \}}{1+\exp\{\bx_i\alpha_0 \}}
\]
and $e_i=\epsilon_{i\widetilde y}+\frac{1}{\tau}\epsilon_{id}$. We assume that $ \frac{ 1}{\sqrt n}  \sum_{i=1}^n [\epsilon_{i\widetilde y}+\frac{1}{\tau}\epsilon_{id}]$ converges in distribution  a multivariate normal  $N_{r}(0,\Sigma(\alpha_0))$. 
\item[] (b) Let
\[
 \frac{1}{n} \left[n \I+ \frac{1}{\tau} \sum_{i=1}^n \frac{\exp\{\bx_i \alpha_0\}}{[1+\exp\{\bx_i \alpha_0\}]^2} \bx_i^\top \bx_i  \right] \stackrel{p}{\longrightarrow} \Omega(\alpha_0),
 \]

where $\Omega(\alpha_0)$ is a  $r\times r$ positive definite matrix and $\I$ is the identity matrix.
\end{enumerate}

\medskip
\begin{theorem} \textbf{(Oracle properties)}
Suppose conditions (a) \& (b) are fulfilled, further $\lambda_n/\sqrt n \rightarrow 0$ and $\lambda_n \sqrt n \rightarrow \infty$.  Then
\begin{enumerate}
\item[(a)] $\Pr(\widehat \alpha_{2}= \mathbf{0})\rightarrow 1$ as $n\rightarrow \infty$
\label{th:norm}
\item[(b)] $\sqrt n(\widehat \alpha_{01}- \alpha_{01}) \CiD N(0,\Omega_{11}^{-1}\Sigma_{11}\Omega_{11}^{-1}),$
\end{enumerate}
where $\alpha_{01}=\alpha_{01}$ is the true vector of
non-zero coefficients. Also, $\Omega_{11}$ and $\Sigma_{11}$ are corresponding elements of $\Omega$ and $\Sigma$, respectively.
\label{th:oracle}
\end{theorem}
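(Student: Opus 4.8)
The plan is to follow the standard two‑part oracle argument of \citet{fan2001variable} and \citet{zou2006adaptive}, exploiting the convexity of $M_p$ already noted before \eqref{eq:alpha}. The key preliminary observation is that, although $M(\alpha)$ contains absolute values, these act only on the data statistics $\sum_i\bx_i\widetilde y_i$ and $\sum_i\bx_i d_i$ and \emph{not} on $\alpha$; hence $M$ is a smooth, convex function of $\alpha$, and the only nonsmoothness in $M_p$ is the weighted lasso penalty. Differentiating exactly as in \eqref{eq:alpha} and inserting the definitions in condition (a), I would first record that the score of the smooth part collapses to $\partial M(\alpha_0)/\partial\alpha=-\sum_{i=1}^n e_i$, so that $n^{-1/2}\partial M(\alpha_0)/\partial\alpha\CiD N_r(0,\Sigma(\alpha_0))$ by (a); the Hessian $\partial^2 M(\alpha)/\partial\alpha\,\partial\alpha^\top$ is precisely the bracketed matrix of condition (b), so $n^{-1}\partial^2 M(\alpha_0)/\partial\alpha\,\partial\alpha^\top\CiP\Omega(\alpha_0)$.

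Next I would reparametrize by $u=\sqrt n(\alpha-\alpha_0)$ and study the convex process
\[
\Psi_n(u)=M_p(\alpha_0+u/\sqrt n)-M_p(\alpha_0).
\]
A second‑order Taylor expansion of the smooth part gives $M(\alpha_0+u/\sqrt n)-M(\alpha_0)=-W_n^\top u+\tfrac12 u^\top\{n^{-1}\partial^2 M(\alpha_0)/\partial\alpha\,\partial\alpha^\top\}u+o_p(1)$, with $W_n=n^{-1/2}\sum_i e_i\CiD N_r(0,\Sigma)$. For the penalty I would treat the two blocks separately. On the nonzero block $\mathcal A=\{j:\alpha_{0j}\neq0\}$ the weight $\{(\widetilde\alpha_{jy})^2(1+|\widetilde\alpha_{jd}|)^2\}^{-1}$ converges to a positive constant, so the increment is $(\lambda_n/\sqrt n)\,O_p(1)\,|u_j|\to0$ because $\lambda_n/\sqrt n\to0$. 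On the zero block, $\sqrt n$‑consistency forces $\widetilde\alpha_{jy}=O_p(n^{-1/2})$, so the weight is $\asymp n$ and the increment $(\lambda_n/\sqrt n)\,O_p(n)\,|u_j|\asymp\lambda_n\sqrt n\,|u_j|$ diverges unless $u_j=0$, using $\lambda_n\sqrt n\to\infty$. Hence $\Psi_n$ converges pointwise, and by convexity in the epigraph sense, to
\[
\Psi(u)=-W_{\mathcal A}^\top u_{\mathcal A}+\tfrac12 u_{\mathcal A}^\top\Omega_{11}u_{\mathcal A}
\]
on $\{u:u_{\mathcal A^c}=\mathbf 0\}$, and $\Psi(u)=+\infty$ otherwise.

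By the argmin theorem for convex processes, the minimizer $\hat u=\sqrt n(\widehat\alpha-\alpha_0)$ converges in distribution to $\arg\min\Psi$, which, since $\Omega$ is positive definite, is the unique point with $\hat u_{\mathcal A^c}=\mathbf 0$ and $\hat u_{\mathcal A}=\Omega_{11}^{-1}W_{\mathcal A}\sim N(0,\Omega_{11}^{-1}\Sigma_{11}\Omega_{11}^{-1})$; the latter is exactly claim (b). To upgrade $\hat u_{\mathcal A^c}\CiD\mathbf0$ to the exact sparsity statement (a), I would append the usual Karush--Kuhn--Tucker step: on the event $\widehat\alpha_j\neq0$ for some $j\in\mathcal A^c$, stationarity forces $\mathrm{sgn}(\widehat\alpha_j)=-(\lambda_n w_j)^{-1}\partial M(\widehat\alpha)/\partial\alpha_j$, whose right‑hand side is $O_p(\sqrt n)/O_p(\lambda_n n)=O_p\{1/(\lambda_n\sqrt n)\}=o_p(1)$, contradicting $|\mathrm{sgn}(\cdot)|=1$; hence $\Pr(\widehat\alpha_2=\mathbf 0)\to1$.

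The main obstacle I anticipate is controlling the adaptive weights, and in particular justifying the rate dichotomy $w_j=O_p(1)$ on $\mathcal A$ versus $w_j\asymp n$ on $\mathcal A^c$. This rests entirely on the $\sqrt n$‑consistency of the first‑stage estimators $\widetilde\alpha_{jy}$ (together with $1+|\widetilde\alpha_{jd}|$ being bounded away from $0$ and $\infty$ in probability), so I would need to verify that the least‑squares outcome coefficients are root‑$n$ consistent and, crucially, that the preliminary plug‑in $\widetilde\theta$ entering $\widetilde y_i=y_i-\widetilde\theta d_i$ contributes only an $O_p(n^{-1/2})$ perturbation whose effect is already absorbed into the limiting covariance $\Sigma(\alpha_0)$ posited in (a). A secondary point is that the $+\infty$ barrier in $\Psi$ must be legitimized before, not after, the consistency step; here the convexity of $M_p$ is what lets me bypass a separate Brouwer‑type existence argument and read off both conclusions directly from epi‑convergence of $\Psi_n$.
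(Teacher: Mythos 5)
Your proposal is correct and takes essentially the same route as the paper: the paper's own proof also reparametrizes as $b=\sqrt n(\alpha-\alpha_0)$, expands the smooth part of $M_p$ to second order so that conditions (a) and (b) deliver the score limit $N_r(0,\Sigma)$ and Hessian limit $\Omega$ (its Lemma 1), establishes exact sparsity through the same KKT/stationarity contradiction in which the adaptive weight $\nu_j\asymp n$ on the zero block makes $\lambda_n\nu_j/\sqrt n\to\infty$ (citing Zou, 2006), and obtains part (b) from convexity together with Geyer's (1994) argmin convergence, exactly your epi-convergence step with the $+\infty$ barrier off $\{u_{\mathcal A^c}=\mathbf 0\}$. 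The only differences are cosmetic ordering (you extract normality from the epi-limit first and then confirm sparsity by KKT, while the paper proves consistency, then sparsity, then normality) and that you flag explicitly the root-$n$ consistency of $\widetilde\alpha_{jy}$ and the $O_p(n^{-1/2})$ effect of the plug-in $\widetilde\theta$, which the paper uses implicitly without comment.
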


\subsection{Choosing the Tuning Parameter} \label{tuning parameter}

We select the tuning parameter using the {\it Generalized Cross Validation}
(GCV) method suggested by \cite{tibshirani1996regression} and
\cite{fan2001variable}. Let $\bX_{\lambda}$ be the selected covariates corresponding to a specific value of $\lambda$, we first regress $\widetilde Y$ on $\bX_{\lambda}$ and calculate the residual sum of square (RSS) of this model. Then
\[
\text{GCV}(\lambda)=\frac{ \text{RSS}(\lambda)/n}{[1-d(\lambda)/n]^2},
\]
where $d(\lambda) =
\text{trace} [\bX_{\lambda}(\bX_{\lambda}^\top \bX_{\lambda} + n\Sigma_{\lambda}(\widehat
\alpha))^{-1})\bX_{\lambda}^\top]$ is the effective number of parameters and
$\Sigma_{\lambda}( \alpha) =
\text{diag}[p^\prime_{\lambda}(|\alpha_1|)/|\alpha_1|,...,p^\prime_{\lambda}(|\alpha_{r}|)/|\alpha_{r}|]$.   The selected tuning parameter $\widehat \lambda$ is defined by $\widehat
\lambda= \text{arg}\min_{\lambda} \text{GCV}(\lambda)$.

\subsection{Estimation of the causal effect} \label{sec:est}

For treatment effect estimation, we fit the following model using the set of covariates selected in the previous step; note that a user may want to use other causal adjustment models such as inverse probability weighting or propensity score matching, and the selection approach can be used for these procedures also.

Our model is a slight modification of the conventional propensity score regression approach of \cite{RobinsMarkNewey1992}, and specifies
\begin{align}
\E[Y_i|S_i=s_i, \bX_i=\bx_i]=\theta s_i+ g(\bx;\gamma), \label{eq:PSR5}
\end{align}
where $S_i=D_i-\E[D_i|\bx_i] = D_i -\pi(\bx_i)$, $g(\bx;\gamma)$ is a function of
covariates and $\pi$ is the propensity score. The quantity $S_i$ is used in
place of $D_i$; if $D_i$ is used the fitted model may result in a biased estimator for $\theta$ since $g(\bx;\gamma)$ may be incorrectly specified. By defining $S_i$ in this way, we restore $ cor(S_i,\bX_{ij})=0$ for
$j=1,2,..,p$ where $p$ is the number of  selected variables (if $\pi(\bx_i) = \E[D_i|\bx_i]$ is correctly specified), as $\pi(\bx_i)$ is the (fitted) expected value of $D_i$, and hence $\bx_j^\top (D - \pi(\bx) )=0$, where $\bx_j^\top = (x_{1j},\ldots,x_{nj})$. Therefore,
misspecification of $g(.)$ will not result in an inconsistent estimator of $\theta$.

In general, this model results in a \textit{doubly robust} estimator (see \cite{davidian2005semiparametric}, \cite{Schafer2005} and \cite{bang2005doubly}); it yields a consistent estimator of $\theta$ if \textit{either} the propensity score model or conditional mean model \eqref{eq:PSR5} is correctly specified, and is the most efficient estimator \citep{tsiatis2006semiparametric} when both are correctly specified. For additional details on the related asymptotic and finite sample behavior, see \cite{kang2007demystifying}, \cite{neugebauer2005prefer}, \cite{van2003unified}
and \cite{robins1999robust}. \\

\noindent \textbf{Remark 3:} The importance of the doubly robust estimator is that, provided the postulated outcome and treatment model identify the true non-zero coefficients in each model, the proposed estimation procedure will consistently estimate the treatment effect given that at least one of the propensity score or outcome models are correctly specified. Assuming linear working models, a sufficient but {\it{not}} necessary condition for selecting non-ignorable confounders is the linearity of the true models in their parameters. %In \ref{app:sim}, we conducted simulation studies under different misspecification scenarios where the true models are non-linear in parameters and working models are linear.

The model chosen for estimation of the treatment effect is data dependent. Owing to the inherited uncertainty in the selected model, making statistical inference about the treatment effect becomes ``post-selection inference". Hence, inference about the treatment effect obtained in the estimation step needs to be done with caution. The weak consistency of the post-selection estimator results from the following theorem.
\begin{theorem}
Let $\zeta(\hat \theta_{M_n},M_n)$ be a smooth function of $\hat \theta_{M_n}$ where $M_n$ denotes a model formed from a set of selected variables using our method. Then $\zeta(\hat\theta_{M_n},M_n) \CiP \zeta(\theta_{M_0},M_0)$ as $n \rightarrow \infty$ where $M_0$ denotes a model formed from the set of non-zero coefficients.
\label{th:consis}
\end{theorem}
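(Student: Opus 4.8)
The plan is to reduce this post-selection statement to a fixed-model convergence result, exploiting the selection consistency implied by Theorem \ref{th:oracle}. The key difficulty is that $\widehat \theta_{M_n}$ is evaluated on the \emph{random} model $M_n$, so one cannot invoke the continuous mapping theorem directly. I would instead show that $M_n$ equals the true model $M_0$ with probability tending to one, and then argue on the event $A_n=\{M_n=M_0\}$, on which $\widehat \theta_{M_n}$ collapses to the oracle estimator $\widehat \theta_{M_0}$.

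First I would establish model selection consistency, $\Pr(A_n)\to 1$. Theorem \ref{th:oracle}(a) gives $\Pr(\widehat \alpha_2=\mathbf{0})\to 1$, so with probability tending to one no truly inactive covariate enters the selected model (no false positives). For the active set, the asymptotic normality in Theorem \ref{th:oracle}(b) implies $\widehat \alpha_1 \CiP \alpha_{01}$ with each component of $\alpha_{01}$ a fixed non-zero constant, so $\widehat \alpha_1$ stays bounded away from $\mathbf{0}$ and every active covariate is retained with probability tending to one (no false negatives). Combining the two yields $\Pr(M_n=M_0)\to 1$.

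On $A_n$ we have $\widehat \theta_{M_n}=\widehat \theta_{M_0}$, the estimator obtained by fitting the doubly robust regression \eqref{eq:PSR5} using exactly the covariates of $M_0$. Since $M_0$ contains all non-ignorable confounders and outcome predictors, the outcome working model is correctly specified, so by the double robustness discussed in Section \ref{sec:est} the oracle estimator is consistent, $\widehat \theta_{M_0}\CiP \theta_{M_0}$; as $\zeta(\cdot,M_0)$ is smooth, hence continuous, the continuous mapping theorem gives $\zeta(\widehat \theta_{M_0},M_0)\CiP \zeta(\theta_{M_0},M_0)$. I would then transfer this to the selected model by a union bound: for any $\epsilon>0$,
\begin{align*}
&\Pr\!\left(\left|\zeta(\widehat \theta_{M_n},M_n)-\zeta(\theta_{M_0},M_0)\right|>\epsilon\right) \\
&\qquad \le \Pr\!\left(\left|\zeta(\widehat \theta_{M_0},M_0)-\zeta(\theta_{M_0},M_0)\right|>\epsilon\right)+\Pr(A_n^{c}),
\end{align*}
using that on $A_n$ the two arguments coincide; the first term vanishes by the display above and the second by selection consistency, giving the claim. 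The main obstacle is the first step: one must rule out both false positives (Theorem \ref{th:oracle}(a)) and false negatives, the latter relying essentially on the asymptotic normality of the oracle estimator in Theorem \ref{th:oracle}(b); the remaining steps are routine once selection consistency and the consistency of \eqref{eq:PSR5} are in hand.
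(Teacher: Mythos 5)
Your proposal is correct and follows essentially the same route as the paper: both reduce the claim to (i) consistency of the fixed-model oracle estimator combined with continuity of $\zeta$, and (ii) selection consistency $\Pr(M_n \neq M_0) \to 0$ from Theorem \ref{th:oracle}, then transfer to the random model by bounding the probability on the event $\{M_n \neq M_0\}$ (your union bound is the same device as the paper's triangle-inequality-plus-event decomposition). If anything, you are more explicit than the paper in deriving selection consistency from both parts (a) and (b) of Theorem \ref{th:oracle}, which the paper compresses into a single citation of the oracle property.
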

%Although, in this paper, we do not derive the asymptotic distribution or variance of the treatment effect estimator, in the simulation section, we provide some empirical results about the performance of a bootstrap estimator which is based on a method introduced by \cite{chatterjee2011bootstrapping}.

%\begin{comment}
\subsection{The Procedure Summary}

The penalized treatment effect estimation process explained in Sections 3.1 to \ref{sec:est} can be summarized as follows: \vspace{-.1in}
\begin{enumerate}
\item Estimate the vector of parameters $\hat \alpha=\arg\min_{\alpha} M_p(\alpha)$ where $M_p(\alpha)$ is defined in (\ref{eq:peemis}).
\item Using the covariates with $\alpha \neq 0$, estimate the propensity score $\hat \pi(\bX)$.
\item Define a random variable $S_i=D_i-\hat\pi(\bX_i)$ and fit the outcome model $\E[Y_i|d,\bx]=\theta s_i+ g(\bx_i;\gamma).$
The vector of parameters $(\theta,\gamma)$ is estimated using ordinary least squares. For simplicity, we assume the linear working model for $g(\bx_i;\gamma)=\gamma^\top\bx_i$. The design matrix $\bX$ includes a subset of variables with  $\alpha \neq 0$.
\end{enumerate}

\subsection{Non-orthogonal Covariates}\label{sec:non-orth}

In this section we relax the assumption of orthogonality of the covariates. Let $\bX$ denote a $n \times r$ design matrix such that the covariance matrix $var(\bX)$ has some non-zero off-diagonal entries. Let $\bU$ be the orthogonalized matrix of covariates constructed using orthogonalization techniques such as the Gram-Schmidt algorithm. Proposition \ref{co:consis} states certain facts about the performance of the proposed method when the original vector of covariates $\bX$ is replaced by $\bU$ in the objective function $M(.)$. Specifically, it states that the proposed method still has the sparsity property as long as the penalty function is constructed based on the original $\bX$, i.e., $\tildealy$ and $\tildeald$ are the least squares (or ridge) estimate of the parameters in the outcome and treatment models given $\bX$, respectively. Define

%Suppose $\alpha_0=(\beta_0,\alpha_{01},\alpha_{02})$ is the true parameter value of the  $(r+1)$-dimensional vector of parameters where$\alpha_{02}=(\alpha_j) = 0$ for $j=s+1,...,r$; $s$ denotes the true number ofpredictors present in the model ({\it exact sparsity} assumption).    Let $\widehat \alpha=(\widehat \beta,\widehat \alpha_1,\widehat \alpha_2)$ be the vector of estimators  corresponding to \eqref{eq:peemis}.

\begin{align}
M^u(\alpha)=\frac{1}{2n} \Big[\Big| \sum_{i=1}^n\bu_i \widetilde y_i\Big|- n\alpha^\top  \Big]  & \Big[\Big| \sum_{i=1}^n\bu_i \widetilde y_i\Big|-     n\alpha^\top   \Big]^\top+  \nonumber \\
&  \frac{1}{\tau} \Big[-\Big| \sum_{i=1}^n\bu_id_i\Big|\alpha + \sum_{i=1}^n\log(1+\exp\{ \bu_i \alpha \})\Big]
\label{eq:objnon-orth}
\end{align}
 Similar to Section \ref{theorems}, let $ \alpha_0^u=\alpha_{01}^u$ be the true parameter value in the model that includes $\bU$. Let $\widehat \alpha^u$ be the solution to
\[
\arg \min_{(\alpha^u)} M^u(\alpha^u)+ \lambda \sum_{j=1}^r  \frac{|\alpha_j^u|}{( \widetilde \alpha_{jy})^2(1+| \widetilde \alpha_{jd}|)^2}.
\]
 Because of the orthogonalization there might be mismatches between the indices of coefficients with zero value. For example, it may happen that the $j$th element of the vector of parameters $\alpha_j^u \neq 0$ while $\alpha_j = 0$. Therefore, if our goal is to identify the set of zero coefficients based on the covariate vector $\bX$, the penalty function has to impose heavier penalty on coefficients that are significant in a model based on $\bU$ but insignificant  in the original covariate vector $\bX$. Similarly, it may happen that the $j$th element of the vector of parameters $\alpha_j^u = 0$ while $\alpha_j \neq 0$. In this case, the penalty imposed on the $j$th covariate has to go to zero as $n$ goes to infinity. This ensures that the penalty is not strong enough to set $\hat \alpha_j^u = 0$ for any finite $n$.  %Corollary \ref{co:consis} shows that this can be achieved if $\lambda_n$ goes to infinity in a faster rate than in Theorem \ref{th:oracle}.

%Corollary \ref{co:consis} is a direct consequence of Theorem 1.

\begin{proposition}
 Let $\lambda_n/ \sqrt n \rightarrow 0$ and $\lambda_n \rightarrow \infty$.  Then $\Pr(\widehat \alpha^u_j=0| \alpha_j=0 )\rightarrow 1$ as $n\rightarrow \infty$ and, for any finite $n$, if $\alpha_j \neq 0$,  $\widehat \alpha^u_{j} \neq 0$.
\label{co:consis}
\end{proposition}
%\end{comment}
%%%%%%%%%%%%%%%%%%%%%%%%%%%%%%%%%%%%%%%%%%%%%%%%%%%%%%%%Theorem%%%%%%%%%%%%%%%%%%%%%%%%%%%%%%%%%%%%%%%%%%%%%%%%%%%%%%%%%%%%%%%%%%
The results of this proposition relies on the choice of the penalty function. Because the penalty function is defined based on $\tildealy$ and $\tildeald$; that are the least squares (or ridge) estimate of the parameters given $\bX$, the strength of the imposed penalty  is derived by  $\bX$ and not $\bU$. This helps us, first, to estimate $\widehat \alpha_j^u=0$ when $ \alpha_j=0$ even if $ \alpha_{0j}^u\neq 0$, and second, estimate $\widehat \alpha_j^u \neq 0$ when $ \alpha_j \neq 0$ even if $ \alpha_{0j}^u= 0$.

\section{Simulation Studies} \label{sec:simulation}
In this section, we study the performance of our proposed variable selection method using  simulated data. We compare our results with BAC method introduced by \cite{wang2012bayesian},  the Bayesian credible region (Cred. Reg.) introduced by \cite{wilson2014confounder} and outcome penalized estimator (Y-fit). Our simulation  includes a scenario in which there is a weak confounder that is strongly related to the treatment but weakly to the outcome. We consider linear working models for  $g()$ throughout this section.

\begin{table}[t]
\caption{ Simulation results for scenarios 1 \& 2. Bias, S.D. and MSE are for the treatment effect. S.D: empirical standard error. Y-fit is obtained by penalizing the outcome model via the LASSO penalty.}
\label{tab:largep}
\begin{center}
\begin{tabular}{lrrr|rrrrr|} \hline

Method   &\multicolumn{1}{c}{Bias} & \multicolumn{1}{c}{S.D}  & \multicolumn{1}{c|}{MSE} &
\multicolumn{1}{c}{Bias} & \multicolumn{1}{c}{S.D} & \multicolumn{1}{c}{MSE}  \\ \hline
Scenario 1.               & \multicolumn{3}{c}{$n=300$} & \multicolumn{3}{c}{$n=500$} \\
%PSR$^2$  &0.038 &0.122&(-0.206 , 0.282)  & 0.103 &0.096&(-0.089 , 0.295)\\
PMOE$^{\tau=0.1}$      &    0.031 & 0.633 & 0.402&0.036 &0.431 & 0.187          \\
PMOE$^{\tau=0.5}$      &    0.004 & 0.580 & 0.337&0.041 &0.432 & 0.188          \\
PMOE$^{\tau=1}$       &    0.007 &0.598  &0.357&0.040 & 0.431  &0.188         \\
PMOE$^{\tau=20}$       &    0.008 &0.598 & 0.358&0.040 & 0.431  &0.188         \\
Cred. Reg.  & 0.004 &0.780 &0.608&0.024 &0.585 &0.342  \\
BAC ($\omega$=$\infty$)  &          0.064 &1.332&1.781&0.040&1.062 &1.130  \\
Oracle   &          0.005 &0.593 & 0.351&0.030& 0.430 & 0.186 \\ \hline
%PSR$^2$  &0.038 &0.122&(-0.206 , 0.282)  & 0.103 &0.096&(-0.089 , 0.295)\\
Scenario 2.               & \multicolumn{3}{c}{$n=300$} & \multicolumn{3}{c}{$n=500$} \\
%SCAD   &0.062 &0.606 &0.592& 0.372&0.019 &0.483  &0.456&0.234  \\
PMOE$^{\tau=0.1}$       &   0.027 &0.556  &0.310&0.008 &0.419  &0.175        \\
PMOE$^{\tau=0.5}$       &    0.061 &0.561  &0.321&0.032 &0.415 & 0.173        \\
PMOE$^{\tau=1}$       &    0.123 &0.587  &0.360&0.065 & 0.424  &0.184        \\
PMOE$^{\tau=20}$       &    0.675 &0.584  &0.798&0.788 &0.465  &0.837       \\
Cred. Reg.  &0.010 &0.765 & 0.582&0.027 &0.595  &0.355  \\
%LASSO ($\rho=2/3$)    &    0.037 &0.612& 0.593 &0.375&0.012 &0.481 & 0.460&0.232         \\
Y-fit   &         0.710 &0.598  &0.862&0.818&0.453  &0.875 \\
BAC ($\omega$=$\infty$)  &          0.070 &1.242 &1.542&0.008&1.011 &1.022  \\
Oracle   &         0.026 &0.560 & 0.315&0.025& 0.419 & 0.176 \\ \hline
\end{tabular}
\end{center}
\end{table}

We generate 500 data sets of sizes 300 and 500 from the following two models:
\begin{enumerate}
\item[1.] $D \sim \text{Bernoulli}\left(\dfrac{\exp\{0.2x_1-2x_2+x_5-x_6+x_7-x_8
\}}{1+\exp\{0.2x_1-2x_2+x_5-x_6+x_7-x_8\}}\right)$ \\[12pt]
$Y \sim \text{Normal}(d+2x_1+0.5x_2+5x_3+5x_4, 4)$

\bigskip

\item[2.] $D \sim \text{Bernoulli}\left(\dfrac{\exp\{0.2x_1-2x_2+x_5-x_6+x_7-x_8}
    {1+\exp\{0.2x_1-2x_2+x_5-x_6+x_7-x_8\}}\right)$, \\[12pt]
$Y \sim \text{Normal}(d+2x_1+0.2x_2+5x_3+5x_4, 4)$
\end{enumerate}
where $\bX_k$ has a $N(1,4)$ for $k=1,...,100$. Note that in the second scenario, $x_2$ is considered as a weak confounder.  Results are summarized in Table \ref{tab:largep}; the Y-fit row refers to the estimator obtained by penalizing the outcome model using {\it LASSO} penalty, and the {\it{Oracle}} row refers to estimator obtained by including $(x_1,x_2,x_3,x_4)$ in the propensity score and outcome model (\ref{eq:PSR5}). We studies the performance of PMOE for different values of constant $\tau=0.1,0.5,1$ and 20. In the first scenario  there is no weak confounder and the  Y-fit is omitted since the results are similar to the PMOE$^{\tau=20}$   row.  The Bayesian adjustment for confounding (BAC) has been implemented using the \texttt{R} package \texttt{BEAU} with $\omega=\infty$ and the Bayesian credible region (Cred. Reg.) has been implemented using the \texttt{R} package \texttt{BayesPen} with flat prior.

{ The variance of the estimator in the BAC is too large due to the inclusion of  spurious variables that are not related to the outcome. The PMOE   and Creg. Reg. estimators, however, are unbiased and have smaller variance. In fact, PMOE has the lowest variance compared to BAC and Creg. Reg. methods regardless of the value of $\tau$.  In the second scenario, the Y-fit estimator is bias because of under selecting the confounder $X_2$. Also, the bias of the proposed PMOE estimator increases by increasing $\tau$ that is expected because as $\tau$ increases the proposed method should perform similarly to the outcome based variable selection methods such as Y-fit. %Note that Y-fit,  PMOE$^\tau=1$ and PMOE$^\tau=20$ will capture $X_2$ as sample size grows however, for small sample this method fails to adjust for this non-ignorable  confounder.

Table \ref{tab:largep2} presents the average number of coefficients set to zero correctly and incorrectly under the second scenario. There are four non-zero coefficients in our generative model so the number in the correct column should be 96 and in the incorrect column should be 0. This table shows that both Creg. Reg. and BAC are somewhat conservative and include some of the variables that should not be included which can be the source of the observed variance inflation in Table \ref{tab:largep}. This is mostly due to inclusion of variables that are predictor of treatment model but have no association with the outcome. Also, increasing $\tau$ in PMOE, increases the chance of setting the coefficient of $X_2$ to zero. The Y-fit row shows that this method is setting a nonzero coefficient to zero (i.e., coefficient of $X_2$) which explains the bias in Table \ref{tab:largep}. This, in fact, highlights the importance of our proposed method. Our simulation studies suggest that we should avoid large values of $\tau$ if we are concern about excluding weak confounders such as $X_2$.

}

\begin{table}[t]
\caption{ { { Simulation results for scenarios 1 \& 2. Number of coefficients that are correctly or incorrectly set to zero. Y-fit is obtained by penalizing the outcome model via LASSO penalty.  }}}\centering
%\caption{Performance of the proposed method when $r>n$.}
\begin{tabular}{c c c  |c c c c c c} \hline
Method  & \multicolumn{1}{c}{Correct} & \multicolumn{1}{c}{Incorrect}
 & \multicolumn{1}{c}{Correct} & \multicolumn{1}{c}{Incorrect}\\
\hline
  & \multicolumn{2}{c}{$n=300$} & \multicolumn{2}{c}{$n=500$}\\
PMOE$^{\tau=0.1}$   &95.40 &0.03   &  95.74 &0.01   \\
PMOE$^{\tau=0.5}$   &95.90 &0.09   &  96.00 &0.01   \\
PMOE$^{\tau=1}$   &96.00 &0.18   &  96.00 &0.06   \\
PMOE$^{\tau=20}$   &96.00 &0.76   &  96.00 &0.89  \\
Cred. Reg.   &92.39 &0.00  &   92.61 &0.00    \\
BAC ($\omega$=$\infty$)  &91 &0.00  &   91 &0.00    \\
 Y-fit      &97 &0.90   &   97 &0.92   \\           \hline
\end{tabular}
\label{tab:largep2}
\end{table}

Simulation studies presented in Appendix 2 study the performance of our  covariate selection and estimation procedure when the covariates are non-orthogonal. Tables 5 \& 6 show that the proposed method is still outperforming the other methods. Moreover, in Appendix 3, we study  cases that the number of covariates is larger than the sample size ($r>n$). We also investigate cases where either of the working models of the propensity score or the outcome model is misspecified. Our results show that the proposed method performs well, and outperforms Y-fit

\section{Application: the effect of life expectancy on economic growth} \label{sec:application}

In this section we examine the performance of our proposed method on the cross-country economic growth data used by \cite{doppelhofer2003determinants}. For illustration purposes, we focus on a subset of the data which includes 88 countries and 35 variables. Additional details are provided in \cite{doppelhofer2009jointness}. We are interested in selecting non-ignorable variables which confound the effect of {\it life expectancy} (exposure variable) as a measure of population health on the {\it average growth rate of gross domestic product per capita in 1960-1996} (outcome).

The causal (or, at least, \emph{unconfounded}) effect of life expectancy on economic growth is controversial. \cite{acemoglu2006disease} find no evidence of increasing life expectancy on economic growth while \cite{husain2012alternative} shows that it might have positive effect. We dichotomize the life expectancy based on the observed median, which is 50 years.  Hence, the exposure  variable D=1 if life expectancy is below 50 years in that country and 0 otherwise.

%The initial propensity score is estimated using logistic regression including all the covariates. The first step in the propensity score causal adjustment method is to check for sufficient overlap between treated and untreated groups (\cite{rubin2008objective}). The two plots on the left side of Figure \ref{fig:PSbins} reveal that there are some subclasses (bins) that  include only treated or control individuals which may result in a poor causal effect estimate.  Therefore, we focus on a subset of our data in which there is  sufficient overlap between treated and untreated workers in all the propensity score subclases (Figure \ref{fig:PSbins}).

%After ensuring confounder balance inside each propensity subclass,

\begin{table}
\caption{The economic growth data: List of significant variables. Y-fit is obtained by penalizing the outcome model via LASSO penalty. }\centering
\begin{tabular}{| l |rrrrrr|} \hline
  Variable   & Y-fit&BAC & Cred. Reg.& PMOE& PMOE& PMOE  \\ 
     & &$\omega$=$\infty$& & $\tau=0.1$& ${\tau=0.5}$& ${\tau=5}$  \\ 
  \hline
 Air Distance to Big Cities &---&---& ---& $\surd$& $\surd$& $\surd$ \\
 Ethnolinguistic Fractionalization &$\surd$&$\surd$&$\surd$& $\surd$& $\surd$& $\surd$\\
Fraction of Catholics&--- &---&---& $\surd$& $\surd$& $\surd$\\
Population Density 1960 &--- &---&$\surd$& $\surd$& $\surd$& ---\\
 East Asian Dummy &$\surd$ &$\surd$&$\surd$& $\surd$&$\surd$& $\surd$\\    
 Initial Income (Log GDP in 1960)    &--- &$\surd$&---& $\surd$&$\surd$& $\surd$\\
Public Education Spending Share    &$\surd$ &---&$\surd$&---&---&---\\
Nominal Government Share    &--- &---&---&$\surd$&---&---  \\
Investment Price    &$\surd$ &$\surd$&$\surd$& ---&---&--- \\
Land Area Near Navigable Water    &--- &---&$\surd$& ---&---&---\\
Fraction GDP in Mining    &--- &---&$\surd$&$\surd$&$\surd$&$\surd$\\
 Fraction Muslim   &--- &---&$\surd$&---& ---&---\\
 Political Rights    &--- &$\surd$&---&$\surd$&$\surd$&$\surd$\\
 Real Exchange Rate Distortions   &$\surd$ &---&---& ---&---&---\\
 Colony Dummy  &--- &---&$\surd$&---&---&$\surd$\\
 European Dummy    &$\surd$ &---&---& ---&---&---\\
 Latin American Dummy   &$\surd$ &$\surd$&---&$\surd$ &$\surd$&$\surd$\\
 Landlocked Country Dummy   &--- &---&$\surd$&---&---&---\\ 
 Oil producing Country Dummy   &--- &---&---&$\surd$&---&---\\ 
Land Area Near Navigable Water&--- &---&$\surd$&$\surd$&---&---\\ \hline
%Unadjusted& -1.376 & 0.072& (-1.520,-1.232)& --& --&--&--&--&--&--\\
\end{tabular}
\label{tab:list}
\end{table}

We select the significant covariates  for the conditional mean and the treatment  models using the penalized objective function (\ref{eq:peemis}).  After covariate selection, we fit the model $\E[Y|s,\bx] = \theta s+ g(\bx;\gamma)$, where
$\theta$ is the treatment effect parameter (the function $g()$ assumed to be linear). Interaction or the higher order of the propensity score can be added to the outcome model if needed.

In our analysis, Y-fit refers to the case where just the outcome model is penalized using LASSO to select the significant covariates. We also implement the BAC with $\omega=\infty$ and Credible region with flat prior  methods. Table \ref{tab:list} presents the list of variables and their estimated coefficients which are selected at least by one of the methods.

{ The proposed method selects 12 and 9 variables depending on the value of $\tau$ while Y-fit, BAC and Cred. Reg. select 7, 6 and 11 variables, respectively.  Y-fit and PMOE$^{\tau=5}$ are under selecting some of non-ignorable confounders which   barely predict the outcome.  Specifically, {\it Population Density 1960}, and {\it Initial Income } are such non-ignorable confounders which are known to be significant in the economics literature. Table \ref{tab:gdata} shows that this under selection leads to a biased treatment effect estimator. BAC and Creg. Reg. are also ignoring some of the important potential confounders. Specifically, BAC does not select {\it{Population Density 1960}} and Creg. Reg. is ignoring {\it{Initial outcome in 1960}}. \cite{doppelhofer2011robust} listed the latter two variables as potential confounders.  These under selections may due the non-linear nature of the outcome or/and treatment model. Moreover, because there are strong evidence that variables such as {\it{Oil producing Country, Land Area Near Navigable Water, and Nominal Government Share}} are unlikely to confound the effect of the life expectancy on the economic growth \citep{doppelhofer2009jointness, acemoglu2006disease, ley2007jointness, ley2009comments, ley2009effect, magnus2010comparison, doppelhofer2011robust, eicher2011default},   PMOE$^{\tau=0.5}$ seems to select more reasonable covariates than PMOE$^{\tau=0.1}$.

To gain insight into the effect of parameter $\tau$ on the selection results, we plot the estimated coefficients $\alpha$ for different values of $\tau$ for given tuning parameters. In Figure \ref{fig:difftau}, the blue solid (dark dashed) lines correspond to coefficients that their estimated value is (not) greater than 0.05 for the entire range of $\tau$. Also, displays  (a)--(d) correspond to tuning parameter values $\lambda=0.0$, 0.0001, 0.001, and 0.01, respectively.  Figure \ref{fig:difftau} shows that, for different values of $\tau$, the selected set of covariates may vary slightly. For example, in Figure \ref{fig:difftau}(c) where a moderate penalty function is imposed, the proposed method suggests to include  {\it{Colony}} (the dashed line)  to the selected set for larger values of $\tau$. Also, as the penalty becomes stronger, the estimated coeffiecints become more stable across values of $\tau$ (Figure \ref{fig:difftau}(d)). This is because, for larger values of $\lambda$, the penalty function has a more dominant rule on the variable selection than $\tau$. 
}

\begin{figure}[t]
\centering
\makebox{\includegraphics[scale=.7]{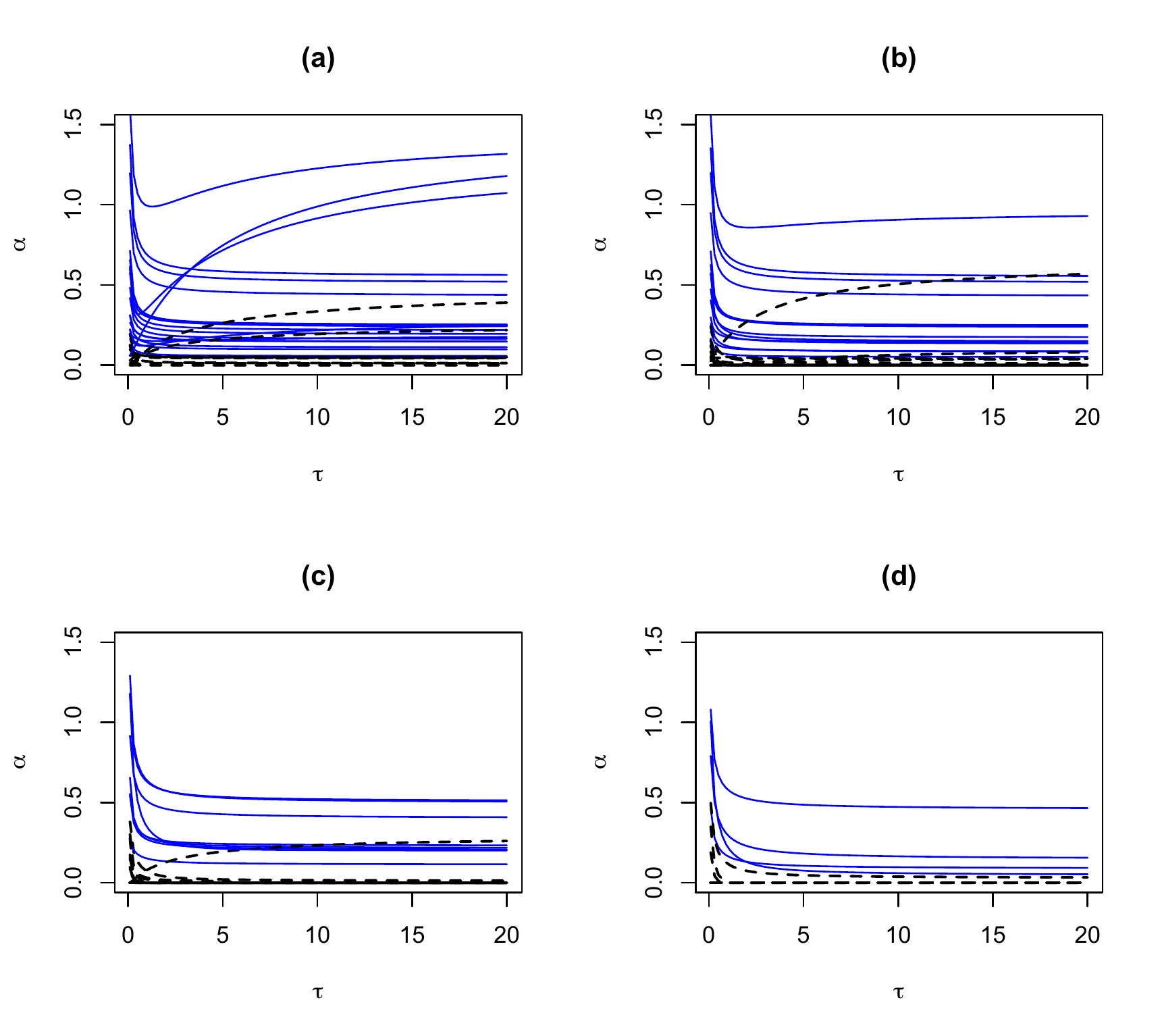}}
\caption{ { Economic growth data: The plot of the estimated parameters using the proposed method given different values of parameter $\tau$. The tuning parameter $\lambda$ is fixed at 0.0, 0.0001, 0.001, and 0.01 in plots (a), (b), (c) and (d), respectively. The blue solid (dark dashed) lines correspond to coefficients that their estimated value is (not) greater than 0.05 for the entire range of $\tau$}}
\label{fig:difftau}
\end{figure}

{ Table \ref{tab:gdata} also reports the stadard errors of the estimated effect of life expectancy using different penalization methods. The standard error of the PMOE estimator is approximated using an idea similar to \cite{chatterjee2011bootstrapping}. Specifically, we bootstrap the sample and in each bootstrap force the components of the penalized estimator $\widehat \alpha$ to zero whenever they are close to zero and estimate the treatment effect using the selected covariates, i.e., we define ${\widehat \alpha}^\dag =\hat \alpha \text{1}(|\hat \alpha|> 1/\sqrt n )$. We utilize this thresholded bootstrap method to approximate the standard error of the treatment effect. Although more investigation is required to validate the asymptotic properties of this method, we only use this standard errors to shed light on the behavior of the penalized estimators. For example, the estimators correspond to PMOE$^{\tau=5}$ and Y-fit have the lowest standard errors that may support the possibility of under-selecting important covariates.    Our results suggests that although  the effect of life expectancy is positive, it is unlikely to be significant that is consistent with \cite{acemoglu2006disease}.
}

\section{Discussion} \label{sec:conclude}

We have established a two-step procedure for estimating an unconfounded treatment effect in high-dimensional settings. First, we deal with the sparsity by penalizing a modified objective function which considers both covariate-outcome and covariate-treatment associations. Then, the selected variables are used to form a doubly robust regression estimator of the treatment effect by incorporating the propensity score in the conditional expectation of the outcome. The selected covariates may be used in other causal
techniques as well as the proposed regression method. The proposed method may also be used to identify valid instrumental variables. Specifically, one can penalize the treatment model first and record the selected variables and then apply the proposed method and identify the instrumental variables as those that are selected by the treatment model but not the proposed method.

\begin{table}[t]
%\caption{ {\small { Penalized ATE estimators based on the SCAD and LASSO penalty functions.  }}}
\caption{The economic growth data. ATE: average treatment effect; Y-fit is obtained by penalizing the outcome model via LASSO penalty.  }\centering
\begin{tabular}{|c|ccl|} \hline
  Method   & ATE & S.D.  & C.I.($\%95$) \\ \hline
PMOE$^{\tau=0.1}$   &0.617 &0.372& (-0.127,1.361)\\
PMOE$^{\tau=0.5}$   &0.475 &0.352& (-0.229,1.179)\\
PMOE$^{\tau=5}$   &0.454 &0.340& (-0.226,1.134)\\
Cred. Reg. &0.820 &0.386&(-0.203,1.341) \\
BAC ($\omega$=$\infty$) &0.524 &0.381& (-0.228,1.286) \\
Y-fit &0.352 &0.334&(-0.111,1.345)\\ \hline
%Unadjusted& -1.376 & 0.072& (-1.520,-1.232)& --& --&--&--&--&--&--\\
\end{tabular}
\label{tab:gdata}
\end{table}

As described in section \ref{sec:est}, any covariate selection procedure which involves the outcome variable affects the subsequent inference of the selected coefficients \citep{lee2013exact, taylor2014exact, tibshirani2014exact, taylor2015statistical, tian2015selective, lee2015model}. This is because the selected model itself is stochastic and it needs to be accounted for.   \cite{berk2012valid} proposes a method to produce a valid confidence interval for the coefficients of the selected model in the post-selection context. In our setting, although we do not penalize the treatment effect, the randomness of the selected model affects the inference about the causal effect parameter through confounding.
Moreover, note that the oracle property of the penalized regression estimators is a pointwise asymptotic feature and does not necessarily hold for all the points in the parameter space \citep{leeb2005model, leeb2008sparse}. In this manuscript, we assume that the parameter dimension ($r$) is fixed while the number of observation tends to infinity. One important extension to our work is to generalize the framework to cases where the tuple $(n,r)$ tends to infinity \citep{negahban2009unified}.  Analyzing the convergence of the estimated vector of parameters in the more general setting requires an adaptation of restricted eigenvalue condition \citep{bickel2009simultaneous} or restricted isometry property \citep{candes2007dantzig}.

\section*{Acknowledgment}
This research was supported in part by NIDA grant P50 DA010075 and NSF grant SES-1260782. The second and third authors acknowledge the support of Discovery Grants from the Natural Sciences and Engineering Research Council (NSERC) of Canada. The authors are grateful to Professor Dylan Small for enlightening discussion.

\appendixone
\section*{Appendix}

In this Appendix, we present the assumptions and proofs of the main and
other auxiliary results. We also conduct a simulation study in which either the response mean or the treatment allocation models are misspecified.
%%%%%%%%%%%%%%%%%%%%%%%%%%%%%%%%%%%%%%%%%%%%%%%%%%%%%%%%%%%%%By the no unmeasured confounder assumption and the independence of $D$ and $(\delta=1,Y(1),Y(0))$ given $\bX$, we have%%%%%%%%%%%%%%%%%%%%%%%%%%%%%%%%%%%%%%%%%%%%%%%%%%%%%%%%%%%%%%%THEOREM%%%%%%%%%%%%%%%%%%%%%%%%%%%%%%%%%%%%%%

\section{Proofs of Theorem 1 \& 2}
\begin{lemma}
Suppose conditions (a) \& (b) are fulfilled, further $\lambda_n/\sqrt n \rightarrow 0$ and $\lambda_n \sqrt n \rightarrow \infty$.  Then there exists a local minimizer of the penalized objective function $M_p(.)$ for which $\| \alpha_0- \hat \alpha \|=O_p(n^{-1/2})$, where $\|.\|$ is the Euclidean norm.
\label{lem:cons}
\end{lemma}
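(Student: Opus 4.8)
The plan is to establish the existence of a $\sqrt{n}$-consistent local minimizer using the standard argument of \cite{fan2001variable}, adapted to our penalized objective function $M_p(\cdot)$. The idea is to show that for any given $\varepsilon > 0$, there exists a sufficiently large constant $C$ such that, with probability at least $1 - \varepsilon$, the objective $M_p(\alpha)$ evaluated on the boundary of the ball $\{\alpha_0 + n^{-1/2} u : \|u\| = C\}$ strictly exceeds its value at the center $\alpha_0$. Since $M_p$ is continuous, this forces the existence of a local minimizer inside the ball, which is precisely the $O_p(n^{-1/2})$ statement. Concretely, I would define $D_n(u) = M_p(\alpha_0 + n^{-1/2}u) - M_p(\alpha_0)$ and decompose it into the contribution from the smooth part $M(\alpha)$ and the contribution from the weighted lasso penalty.

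The first step is to Taylor-expand the smooth part $M(\alpha_0 + n^{-1/2}u) - M(\alpha_0)$ to second order around $\alpha_0$. The linear term involves the gradient of $M$ at $\alpha_0$; from the argmin identity \eqref{eq:alpha} and the definitions in condition (a), this gradient is $-(\sum_i \epsilon_{i\widetilde y} + \tfrac{1}{\tau}\sum_i \epsilon_{id})$, so the linear term becomes $-n^{-1/2} u^\top \bigl(\sum_i \epsilon_{i\widetilde y} + \tfrac{1}{\tau}\sum_i \epsilon_{id}\bigr)$. By condition (a) this equals $-u^\top W_n$ where $W_n = n^{-1/2}\sum_i e_i \CiD N_r(0, \Sigma(\alpha_0))$, so the linear term is $O_p(\|u\|)$. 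The quadratic term is governed by the Hessian of $M$, which by condition (b) satisfies $n^{-1}\nabla^2 M(\alpha_0) \CiP \Omega(\alpha_0)$, a positive definite matrix; hence the quadratic term is $\tfrac{1}{2} u^\top \Omega(\alpha_0) u + o_p(1)$, of exact order $\|u\|^2$. The key observation is that the quadratic term dominates the linear term for large $\|u\| = C$, because $\Omega(\alpha_0)$ is positive definite.

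The second step is to bound the penalty contribution. Writing the penalty as $\lambda_n \sum_j w_j (|\alpha_{0j} + n^{-1/2}u_j| - |\alpha_{0j}|)$ with weights $w_j = [(\widetilde\alpha_{jy})^2(1+|\widetilde\alpha_{jd}|)^2]^{-1}$, only the $s$ nonzero components of $\alpha_0$ contribute a possibly negative amount (the zero components contribute nonnegatively and can be dropped in a lower bound). For each nonzero component, $\bigl||\alpha_{0j} + n^{-1/2}u_j| - |\alpha_{0j}|\bigr| \le n^{-1/2}|u_j|$, and the corresponding weight $w_j$ converges in probability to a finite positive constant since $\widetilde\alpha_{jy}$ and $\widetilde\alpha_{jd}$ are consistent for nonzero true values. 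Thus this piece is bounded in absolute value by $\lambda_n n^{-1/2}\sum_{j \le s} w_j |u_j| = O_p(\lambda_n n^{-1/2}) \cdot O(\|u\|)$, which is $o_p(\|u\|)$ under the assumption $\lambda_n/\sqrt n \to 0$. Hence the penalty is asymptotically negligible relative to the quadratic term.

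Collecting the three pieces, $D_n(u) \ge \tfrac{1}{2} u^\top \Omega(\alpha_0) u - u^\top W_n + o_p(\|u\|^2)$, and since $\Omega(\alpha_0)$ is positive definite with smallest eigenvalue bounded away from zero while $W_n = O_p(1)$, choosing $C$ large makes the dominating quadratic term positive with high probability on $\|u\| = C$. This yields the desired local minimizer. The main obstacle, and the step requiring the most care, is handling the nondifferentiability introduced by the absolute values inside $M(\alpha)$ itself (not merely the penalty): the term $|\sum_i \bx_i \widetilde y_i|$ and the Hessian expression in condition (b) must be interpreted carefully, and the Taylor expansion is valid componentwise only on the event that none of the relevant coordinate sums vanish. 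I would argue that, by the consistency of $\widetilde\alpha_y$ and $\widetilde\alpha_d$ at the nonzero coordinates, these sign patterns stabilize with probability tending to one, so the expansion holds on an event of probability approaching $1$, which suffices for the $O_p(n^{-1/2})$ conclusion.
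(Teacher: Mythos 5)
Your proposal is correct and follows essentially the same route as the paper's proof: the standard Fan--Li local-minimizer argument on the ball $\{\alpha_0 + n^{-1/2}u : \|u\| = C\}$, with the same Taylor decomposition, the same use of conditions (a) and (b) for the linear and quadratic terms, and the same $\lambda_n/\sqrt n \to 0$ bound on the penalty restricted to the nonzero coordinates. The only substantive difference is your final caveat about nondifferentiability of $M(\alpha)$ itself, which is actually vacuous: the absolute values in $M$ wrap data quantities such as $\left|\sum_{i=1}^n \bx_i \widetilde y_i\right|$ that do not depend on $\alpha$, so the smooth part is everywhere twice differentiable in $\alpha$ and no sign-stabilization event is needed.
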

\begin{proof}
Let $\alpha= \alpha_0+ b/\sqrt n$ for real $b$.  Define
\begin{align*}
M(b)=\frac{1}{2n} \Big[\Big| \sum_{i=1}^n\bx_i \widetilde y_i\Big|- &n(\alpha_0+b/\sqrt{n})^\top    \Big]   \Big[\Big| \sum_{i=1}^n\bx_i \widetilde y_i\Big|-     n(\alpha_0+b/\sqrt{n})^\top  \Big]^\top+  \nonumber \\
& \frac{1}{\tau}  \Big[-\Big| \sum_{i=1}^n\bx_id_i\Big|(\alpha_0+b/n) + \sum_{i=1}^n\log(1+\exp\{ \bx_i (\alpha_0+b/n) \})\Big]+ \lambda_n \sum_{j=1}^r \nu_j |\alpha_{0j}+b/\sqrt n|
\end{align*}
where $\nu_j= 1/\{( \tildealy)^2(1+| \tildeald|)^2\}$. Therefore, $\widehat V(b)=M(b)-M(0)$ is given by
\begin{align*}
\widehat V(b)=-\sum_{i=1}^n\epsilon_{i\widetilde y} b/\sqrt n+ b^\top b/2&+\frac{1}{\tau}\left[-\Big|\sum_{i=1}^n d_i \bx_i \Big|b/\sqrt n+\sum_{i=1}^n\log\left\{\frac{1+\exp\{\bx_i(\alpha_0+b/\sqrt n)  \}}{1+\exp\{\bx_i \alpha_0  \}}\right\} \right] \\
& \hspace{.5in} + \lambda_n \sum_{j=1}^r \nu_j \left[ |\alpha_{0j}+b/\sqrt n|-|\alpha_{0j}| \right]
\end{align*}
where $\sum_{i=1}^n \epsilon_{i\widetilde y}= |\sum_{i=1}^n\bx_i\widetilde y_i|- n\alpha_0^\top$. Note that the expected value of $\epsilon_{\widetilde y}$ is not necessarily zero because $\alpha_0$ corresponds to the modified objective function in (3). By applying the Taylor series expansion it can be written as
\begin{align*}
\widehat V(b)=-\sum_{i=1}^n \epsilon_{i\widetilde y}  b/\sqrt n+ b^\top b/2&+\frac{1}{\tau}\sum_{i=1}^n\left[-\epsilon_{id} b/\sqrt n+ \frac{\exp\{\bx_i\alpha_0 \}}{[1+\exp\{\bx_i\alpha_0 \}]^2} b^\top (\bx_i^\top \bx_i ) b/2 \right] \\
& \hspace{.5in} + \lambda_n \sum_{j=1}^r \nu_j \left[ |\alpha_{0j}+b/\sqrt n|-|\alpha_{0j}| \right]+o_p(n)
\end{align*}
where
\[
\sum_{i=1}^n \epsilon_{id}= |\sum_{i=1}^n d_i\bx_i|-\sum_{i=1}^n \bx_i\frac{\exp\{\bx_i\alpha_0 \}}{1+\exp\{\bx_i\alpha_0 \}} .
\]
Similar to $\epsilon_{\widetilde y}$, the expected value of $\epsilon_{d}$ is not necessarily zero. However, by (3), the expected value of $\epsilon_{\widetilde y}+\frac{1}{\tau}\epsilon_{d}$ is zero.

Using central limit theorem and laws of large numbers
\[
 \sum_{i=1}^n [\epsilon_{i\widetilde y}+\frac{1}{\tau}\epsilon_{id}] \frac{ b}{\sqrt n} \stackrel{d}{\longrightarrow} b^\top Z \qquad \qquad
 \frac{1}{n} \left[n \I+ \frac{1}{\tau} \sum_{i=1}^n \frac{\exp\{\bx_i \alpha_0\}}{[1+\exp\{\bx_i \alpha_0\}]^2} \bx_i^\top \bx_i  \right] \stackrel{p}{\longrightarrow} \Omega(\alpha_0),
\]
where $Z\sim N(0,\Sigma(\alpha_0))$ and $\I$ is the identity matrix. The behavior of the third element depends on the type of covariate. Let $\mathcal{A}$ be the set of indices of confounders and predictors of outcome. Thus $\mathcal{A}$ includes indices of variables we wish to select. If $j \in \mathcal{A}$,
$
\frac{\lambda_n}{\sqrt n} \widehat \nu_j \sqrt n \left[ |\alpha_{0j}+b_j/\sqrt n|-|\alpha_{0j}| \right] \stackrel{p}{\longrightarrow} 0.
$
Also, because $ \sum_{i=1}^n [\epsilon_{i\widetilde y}+\frac{1}{\tau}\epsilon_{id}] \frac{ b}{\sqrt n}=O_p(1)$, the order comparison of the right hand side of $\widehat V(b)$ implies that
\[
 b^\top b/2+\frac{1}{\tau}\sum_{i=1}^n\left[ \frac{\exp\{\bx_i\alpha_0 \}}{[1+\exp\{\bx_i\alpha_0 \}]^2} b^\top (\bx_i^\top \bx_i )b/2 \right]
\]
dominates the other terms. Therefore, for any given $\delta>0$, there exists a sufficiently large $\zeta_{\delta}$ such that
\begin{align*}
\lim_{n \rightarrow \infty} P\left\{  \inf_{||b||=\zeta_{\delta}} \widehat V(b) > 0 \right\} \geq 1-\delta,
\end{align*}
which implies that there is a local minimizer, say $\hat \alpha$, that satisfies $\| \alpha_0-\hat \alpha \|=O_p(n^{-1/2})$. $\blacksquare$%If $j 
\end{proof}

\textbf{\it{Proof of Theorem 1}}:
\textbf{Part (a):} Lemma {\ref{lem:cons}} shows that $\widehat \alpha_j \stackrel{p}{\longrightarrow} \alpha_{0j} $, $\forall j \in \mathcal{A}$ which means $\Pr(j \in \mathcal{A}_n) \rightarrow 1$.  The proof is complete if we show that $\forall j \notin \mathcal{A}$, $\Pr(j' \notin \mathcal{A}_n) \rightarrow 1$. 

By the KKT conditions, for $\forall j' \in \mathcal{A}_n$
\[
 \frac{1}{\sqrt n} [|\sum_{i=1}^n x_{ij'} \widetilde y_i|-\sum_{i=1}^n \widehat \alpha^\top(x_{ij'} \bx_i) ]+\frac{1}{\tau\sqrt n}\left[| \sum_{i=1}^n x_{ij'}d_i|-\sum_{i=1}^n x_{ij'}\frac{\exp\{\bx_i \widehat \alpha\}}{1+\exp\{\bx_i\widehat \alpha\}} \right]=\frac{\lambda_n}{\sqrt n} \nu_{j'}. 
\]
Using central limit theorem and laws of large numbers
\[
 \sum_{i=1}^n [\epsilon_{i\widetilde y}+\frac{1}{\tau}\epsilon_{id}] \frac{ b}{\sqrt n} \stackrel{d}{\longrightarrow} b^\top Z \qquad \qquad
 \frac{1}{n} \left[n \I+ \frac{1}{\tau} \sum_{i=1}^n \frac{\exp\{\bx_i \alpha_0\}}{[1+\exp\{\bx_i \alpha_0\}]^2} \bx_i^\top \bx_i  \right] \stackrel{p}{\longrightarrow} \Omega(\alpha_0),
\]
where $Z$ has a normal distribution and $\I$ is the identity matrix. However, for a suitable choice of $\lambda_n$ (i.e., $\lambda_n/\sqrt n \rightarrow 0$ and $\lambda_n \sqrt n \rightarrow \infty$), $\frac{\lambda_n}{\sqrt n} \nu_{j'} \rightarrow_p \infty$. Thus, consistency of the estimator follows from Theorem 2 in Zou (2006).

\textbf{Part (b):} Following the results of Lemma \ref{lem:cons} and by convexity of $\widehat V(b)$ and Geyer(1994), $\arg \min \widehat V(b) \stackrel{d}{\longrightarrow} \arg \min  V(b)$ where $\arg \min \widehat V(b) = \sqrt n (\widehat \alpha-\alpha_0)$.  By Slutsky's theorem, $\arg \min  V(b_{\mathcal{A}})= \Omega_{11}^{-1}Z_{\mathcal{A}}$ where $Z_{\mathcal{A}} \sim N(0, \Sigma_{11})$ and $\arg \min  V(b_{\mathcal{A}^c})= 0$. This completes the proof of Normality. $\blacksquare$

%%%%%%%%%%%%%%%%%%%%%%%%%%%%%%%%%%%%%%%%%%%%%%%%%%%%%%
%%%%%%%%%%%%%%%%%%%%%%%%%%%%%%%%%%%%%%%%%%%%%%%%%%%%%%
%%%%%%%%%%%%%%%%%%%%%%%%%%%%%%%%%%%%%%%%%%%%%%%%%%%%%%

%%%%%%%%%%%%%%%%%%%%%%%%%%%%%%%%%%%%%%%%%%%%%%%%%%%%%%%%%%%%%%%%%%%%%%%%%%%%%%%%%%%%%%%%%%%%%%%%%%%%%%%PROOF OF THEOREM%%%%%%%%%%%%%%%%%

\bigskip

\noindent \textbf{\it{Proof of Theorem 2}}:
Using the triangle inequality,
\begin{align*}
|\zeta(\widehat \theta_{M_n},M_n)-\zeta(\theta_{0},M_0)| \leq |\zeta(\widehat \theta_{M_n},M_n)-\zeta(\widehat \theta_{M_0},M_0)|+|\zeta(\widehat \theta_{M_0},M_0)-\zeta( \theta_{0},M_0)|.
\end{align*}
By differentiability of the $\zeta(.,.)$ function in $\theta$, we have $\zeta(\widehat \theta_{M_0},M_0) \CiP \zeta(\theta_{0},M_0)$, so the second term on the right hand side converges to zero in probability.  Also, $\forall t>0$, we have for each $n$ that
\begin{align*}
 \Pr(|\zeta(\widehat \theta_{M_n},M_n)-\zeta(\widehat \theta_{M_0},M_0)|>t) &= \Pr(\{M_n=M_0\} \cap \{|\zeta(\widehat \theta_{M_n},M_n)-\zeta(\widehat \theta_{M_0},M_0)|>t\}) \\
 & \quad + \Pr(\{M_n \neq M_0\} \cap \{|\zeta(\widehat \theta_{M_n},M_n)-\zeta(\widehat \theta_{M_0},M_0)|>t\})\\
 &\leq \Pr(M_n \neq M_0).
\end{align*}
We have $\Pr(M_n \neq M_0) \longrightarrow 0$ as $n \longrightarrow \infty$ by the oracle property of our procedure (Theorem 1). See also Theorem 4.2 in Wasserman and Roeder (2009).  This completes the proof of weak consistency. $\blacksquare$

\vspace{0.5in}
%%%%%%%%%%%%%%%%%%%%%%%%%%%%%%%%%%%%%%%%%%%%%%%%%%%%%%
%%%%%%%%%%%%%%%%%%%%%%%%%%%%%%%%%%%%%%%%%%%%%%%%%%%%%%
%%%%%%%%%%%%%%%%%%%%%%%%%%%%%%%%%%%%%%%%%%%%%%%%%%%%%%
\textbf{\it{Proof of Proposition   1}}:
Let $\alpha^u= \alpha_0^u+b/\sqrt{n}$ and  $\nu_j= 1/\{(\tildealy)^2(1+| \tildeald|)^2\}$. Similar to the proof of Theorem 1, define
\begin{align*}
\widehat V^u(b)= -\sum_{i=1}^n \epsilon_{i\widetilde y}^u  b/\sqrt n+ b^\top b/2 &+\frac{1}{\tau}\sum_{i=1}^n \left[-\epsilon_{id}^u b/\sqrt n+ \frac{\exp\{\bu_i\alpha_0^u \}}{[1+\exp\{\bu_i\alpha_0^u \}]^2} b^\top b/2 \right] \\
& \hspace{.5in} + \lambda_n \sum_{j=1}^r \nu_j \left[ |\alpha_{0j}^u+b/\sqrt n|-|\alpha_{0j}^u| \right]+o_p(n)
\end{align*}
where $\sum_{i=1}^n \epsilon_{id}^u=\Big[|\sum_{i=1}^n d_i\bu_i|-\sum_{i=1}^n \bu_i\frac{\exp\{\bu_i\alpha_0^u \}}{1+\exp\{\bu_i\alpha_0^u \}}\Big]$ and  $\epsilon_{i\widetilde y}^u=[|\sum_{i=1}^n \bu_i \widetilde y_i|- n\alpha_0^{u\top}]$.

Let $\mathcal{A}$ be the set of indices of confounders and predictors of outcome based on the original covariate matrix $\bX$. If $j \in \mathcal{A}$,
$
\frac{\lambda_n}{\sqrt n} \nu_j \sqrt n \left[ |\alpha_{0j}^u+b_j/\sqrt n|-|\alpha_{0j}^u| \right] \stackrel{p}{\longrightarrow} 0.
$
In other words, because $\frac{\lambda_n}{\sqrt n} \rightarrow 0$, the penalty function does not impose any penalty on  $\alpha_j^u$ if $j \in \mathcal{A}$. Thus, for any finite $n$, $\widehat \alpha_j^u \neq 0$  $\forall j \in \mathcal{A}$.

Now, we show that $\Pr(\widehat \alpha_{j}^u=0; j \in \mathcal{A}^c)\rightarrow 1$ as $n\rightarrow \infty$. Let $\ell_n(\alpha_1^u,\alpha_2^u)$ be the negative log-likelihood function of $(\alpha_1^u,\alpha_2^u)$ given $\bU$.   { Also, we assume that, for each $\alpha_{0}^u$, there exists function $M_1(\bu)$ such that for $\alpha$ in the neighborhood of $\alpha_{0}^u$,
\[
\left| \frac{\partial f(\bu,\alpha)}{\partial \alpha} \right| \leq M_1(\bu),
\]
such that $\int M_1(\bu) d\bu < \infty$.  By the mean value theorem,
\[
\ell_n(\alpha_1^u,\alpha_2^u)-\ell_n(\alpha_1^u,0)=\left[ \frac{\partial \ell_n(\alpha_1^u,\xi)}{\partial \alpha_2^u}  \right]^\top  \alpha_2^u,
\]
for some $ \| \xi \| \leq \| \alpha_2^u \|$. Also, by mean value theorem and adding and subtracting $\frac{\partial \ell_n(\alpha_1,0)}{\partial \alpha_2^u}$, we have 
\[
\left\|\frac{\partial \ell_n(\alpha_1^u,\xi)}{\partial \alpha_2^u}- \frac{\partial \ell_n(\alpha_{01}^u,0)}{\partial \alpha_2^u} \right\|^2 \leq \left[ \sum_{i=1}^n M_1(x_i)  \right] \|\xi\| +\left[ \sum_{i=1}^n M_1(x_i)  \right] \|\alpha_{01^u}-\alpha_1^u   \| .
\]
Let $\mathcal{B}$ be a set of indices for which $\alpha_j^u \neq 0$ while $\alpha_j = 0$.  For $j \in \mathcal{B}$, $ \| \xi \| \leq \| \alpha_j^u \|=O_p(1)$. Thus
\[
\left\|\frac{\partial \ell_n(\alpha_1^u,\xi)}{\partial \alpha_2^u}- \frac{\partial \ell_n(\alpha_{01}^u,0)}{\partial \alpha_2^u} \right\|^2 \leq O_p(n).
\]
Because, for $j \in \mathcal{B}$, $\frac{\partial \ell_n(\alpha_{01}^u,0)}{\partial \alpha_2^u}= O_p(n) $,  we conclude that $\frac{\partial \ell_n(\alpha_1^u,\xi)}{\partial \alpha_2^u}  = O_p(n)$.

Let $\ell_n^p(\alpha_1^u,\alpha_2^u)=\ell_n(\alpha_1^u,\alpha_2^u) + \lambda_n \sum_{j \in \mathcal{A} \cup \mathcal{A}^c } \widehat \nu_j |\alpha_j^u|$.  By applying the assessed orders, we have
\begin{align}
\ell_n^p(\alpha_1,\alpha_2)-\ell_n^p(\alpha_1,0) &= \sum_{j \in  \mathcal{A}^c } \{ -|\alpha_{j}|O_p(n)+ \lambda_n \widehat \nu_j |\alpha_j| \} \nonumber\\ &=\sum_{j \in  \mathcal{A}^c } \{ -|\alpha_{j}|O_p(n)+ n \lambda_n O_p(1) |\alpha_j| \}.
\label{eq:ineq}
\end{align}
Thus,
\begin{align*}
\ell_n^p(\alpha_{\mathcal{A}},\alpha_{\mathcal{A}^c})-\ell_n^p(\widehat \alpha_{\mathcal{A}},0) \pm \ell_n^p( \alpha_{\mathcal{A}},0)   \geq \ell_n^p(\alpha_{\mathcal{A}},\alpha_{\mathcal{A}^c})-\ell_n^p(\alpha_{\mathcal{A}},0).
\end{align*}
By (\ref{eq:ineq}), the RHS of the above inequality is positive with probability 1 when $\lambda_n  \rightarrow  \infty$ as $n \rightarrow \infty$. $\blacksquare$

%%%%%%%%%%%%%%%%%%%%%%%%%%%%%%%%%%%%%%%%%%%%%%%%%%%%%%
%%%%%%%%%%%%%%%%%%%%%%%%%%%%%%%%%%%%%%%%%%%%%%%%%%%%%%
%%%%%%%%%%%%%%%%%%%%%%%%%%%%%%%%%%%%%%%%%%%%%%%%%%%%%%
\section{Non-orthogonal covariates}
\label{app:sim2}

In this section, we study the performance of our  covariate selection and estimation procedure when the covariates are non-orthogonal.
We generate 500 data sets of sizes 300 and 500 from the following two models:
\begin{enumerate}
\item[] $D \sim \text{Bernoulli}\left(\dfrac{\exp\{x_1-2x_2+x_5-x_6+x_7-x_8}
    {1+\exp\{x_1-2x_2+x_5-x_6+x_7-x_8\}}\right)$, \\[12pt]
$Y \sim \text{Normal}(d+x_1+0.2x_2-x_3+x_4, 2)$
\end{enumerate}
\noindent where $\bX_k \sim N(1,\sqrt2)$ for $k=1,...,100$. The covariate $x_2$ is considered as a weak confounder.

Tables 5 \& 6 show that the proposed method is still outperforming the other methods. Specifically, the variance of the estimator in the Bayesian credible region (Cred. Reg.) is large due to the inclusion of spurious variables that are not related to the outcome (Table 6). This is the cause of bias in the Cred. Reg. estimator for small sample size $n=300$.

\begin{table}[t]
\caption{ Simulation results for non-orthogonal covariates. Bias, S.D. and MSE are for the treatment effect. S.D: empirical standard error. Y-fit is obtained by penalizing the outcome model via LASSO penalty.}
\label{tab:largep}
\begin{center}
\begin{tabular}{lrrr|rrrrr|} \hline
& \multicolumn{3}{c}{$n=300$} & \multicolumn{3}{c}{$n=500$} \\
Method   &\multicolumn{1}{c}{Bias} & \multicolumn{1}{c}{S.D}  & \multicolumn{1}{c|}{MSE} &
\multicolumn{1}{c}{Bias} & \multicolumn{1}{c}{S.D} & \multicolumn{1}{c}{MSE}  \\ \hline
%SCAD   &0.062 &0.606 &0.592& 0.372&0.019 &0.483  &0.456&0.234  \\
PMOE$^{\tau=0.5}$      &    0.013 &0.365 &0.133&0.028 &0.265 &0.071         \\
Cred. Reg.  &0.169 &0.374 & 0.168&0.050 &0.305  &0.096  \\
%LASSO ($\rho=2/3$)    &    0.037 &0.612& 0.593 &0.375&0.012 &0.481 & 0.460&0.232         \\
Y-fit   &         0.265 &0.375  &0.211&0.190&0.359  &0.169 \\
%BAC ($\omega$=$\infty$)  &          0.070 &1.242 &1.542&0.008&1.011 &1.022  \\
Oracle   &         0.031 &0.300 & 0.091&0.015& 0.222  &0.050 \\ \hline
\end{tabular}
\end{center}
\end{table}

\begin{table}[t]
\caption{ { { Simulation results for non-orthogonal covariates. Number of coefficients that are correctly or incorrectly set to zero. Y-fit is obtained by penalizing the outcome model via LASSO penalty.  }}}\centering
%\caption{Performance of the proposed method when $r>n$.}
\begin{tabular}{c c c  |c c c c c c} \hline
Method  & \multicolumn{1}{c}{Correct} & \multicolumn{1}{c}{Incorrect}
 & \multicolumn{1}{c}{Correct} & \multicolumn{1}{c}{Incorrect}\\
\hline
  & \multicolumn{2}{c}{$n=300$} & \multicolumn{2}{c}{$n=500$}\\
PMOE$^{\tau=0.5}$   &94.24 &0.12   &  95.08 &0.08   \\
Cred. Reg.   &90.03 &0.00  &   90.04 &0.00    \\
%BAC ($\omega$=$\infty$)  &91 &0.00  &   91 &0.00    \\
 Y-fit      &97 &0.71   &   97 &0.55   \\           \hline
\end{tabular}
\label{tab:largep2}
\end{table}

%%%%%%%%%%%%%%%%%%%%%%%%%%%%%%%%%%%%%%%%%%%%%%%%%%%%%%
%%%%%%%%%%%%%%%%%%%%%%%%%%%%%%%%%%%%%%%%%%%%%%%%%%%%%%
%%%%%%%%%%%%%%%%%%%%%%%%%%%%%%%%%%%%%%%%%%%%%%%%%%%%%%

\section{Performance under model misspecification}
\label{app:sim}
In this simulation study, we want to examine the performance of our proposed method when 1) either of the outcome or treatment working models are misspecified and 2) the number of potential confounders ($r$) is larger then the sample size.

\begin{enumerate}
\item[1.] $D \sim \text{Bernoulli}\left(\dfrac{\exp\{0.5x_1-x_2+0.5x_5-0.5x_6+0.5x_7\}}{1+\exp\{0.5x_1-x_2+0.5x_5-0.5x_6+0.5x_7\}}\right)$ \\[12pt]
$Y \sim \text{Normal}(d+x_1+0.2x_2+3x_3+3x_4, 2)$

\medskip

\item[2.] $D \sim \text{Bernoulli}\left(\dfrac{\exp\{0.1x_1+x_2+0.7 \frac{x_{10}+x_{9}}{1+|x_8|}\}}{1+\exp\{0.1x_1+x_2+0.7 \frac{x_{10}+x_{9}}{1+|x_8|}\}}\right)$ \\[12pt]
$Y \sim \text{Normal}(d+x_1+0.2x_2+3x_3+3x_4, 2)$

\medskip

\item[3.] $D \sim \text{Bernoulli}\left(\dfrac{\exp\{0.5x_1-x_2+0.5x_5-0.5x_6+0.5x_7\}}{1+\exp\{0.5x_1-x_2+0.5x_5-0.5x_6+0.5x_7\}}\right)$ \\[12pt]
$Y \sim \text{Normal}\left(d+x_1+2 \dfrac{\exp\{0.2x_3+0.2x_4\}}{\exp\{0.2|x_1|+0.2|x_2|\}}, 2 \right)$
\end{enumerate}
where $\bX_k \sim N(0,2)$ for $k=1,...,550$.

In all the scenarios, we consider linear working models for the treatment and outcome. Thus, in scenarios 2 \& 3, at least one of them is misspecified.  Table \ref{tab:miss} summarized the results. Y-fit refers to the estimator obtained by penalizing the outcome model using SCAD penalty. 

In scenarios 1 \& 2, $x_2$ is a non-ignorable confounder which is weakly associated with the outcome. Ignoring this variable by Y-fit method results in bias which does not go zero by increasing the sample size.  Our proposed method PMOE outperforms Y-fit by increasing the chance of including all the confounders (weak or strong) in the model. Our simulation also shows that the proposed method selects important covariates even when the true generative models  are non-linear. It also highlights the importance of  using a double robust estimator in the estimation step  to obtain a consistent treatment effect estimate when at least one of the outcome or treatment models are correctly specified.

\begin{table}
 \caption{ Performance of the proposed method when either of the outcome or treatment models are misspecified and $r>n$.}
 \label{tab:miss}
\begin{center}
\begin{tabular}{lrrr|rrr|} \hline
Method   &\multicolumn{1}{c}{Bias} & \multicolumn{1}{c}{S.D.} & \multicolumn{1}{c|}{MSE} &
\multicolumn{1}{c}{Bias} & \multicolumn{1}{c}{S.D.} & \multicolumn{1}{c}{MSE}  \\ \hline
Scenario 1.               & \multicolumn{3}{c}{$n=300$} & \multicolumn{3}{|c|}{$n=500$} \\
%PSR$^2$  &0.038 &0.122&(-0.206 , 0.282)  & 0.103 &0.096&(-0.089 , 0.295)\\
PMOE$^{\tau=0.5}$    &  0.010  &0.304 &0.092&0.005 &0.222&  0.049          \\
Y-fit   &         0.301 &0.343 &0.208&0.243& 0.325 &0.165 \\
Oracle   &         0.006 &0.285 &0.081&0.003& 0.203 &0.041 \\ \hline
%PS-fit   &          0.227 &1.033 &1.118&  0.157 &0.805 &0.665 \\ \hline
%PSR$^2$  &0.038 &0.122&(-0.206 , 0.282)  & 0.103 &0.096&(-0.089 , 0.295)\\
Scenario 2.               & \multicolumn{3}{c}{$n=300$} & \multicolumn{3}{|c|}{$n=500$} \\
%PSR$^2$  &0.038 &0.122&(-0.206 , 0.282)  & 0.103 &0.096&(-0.089 , 0.295)\\
PMOE$^{\tau=0.5}$    &  0.026  &0.336 &0.114&0.004 &0.260&  0.068          \\
Y-fit   &         0.060 &0.407 &0.169&0.087& 0.348 &0.129 \\
Oracle   &         0.020 &0.322 &0.104&0.003& 0.258 &0.067 \\ \hline
%PS-fit   &          0.227 &1.033 &1.118&  0.157 &0.805 &0.665 \\ \hline
Scenario 3.               & \multicolumn{3}{c}{$n=300$} & \multicolumn{3}{|c|}{$n=500$} \\
PMOE$^{\tau=0.5}$    &  0.051  &0.363 &0.134&0.012 &0.264&  0.070          \\
Y-fit    &         0.091 &0.373 &0.147&0.112 &0.311 &0.109 \\
Oracle   &         0.002 &0.313 &0.098&0.009& 0.225 &0.051 \\ \hline
%PS-fit   &          0.737 &0.918 &1.387& 0.673& 0.768 &1.044\\\hline
\end{tabular}
\end{center}
\end{table}

\bigskip

\bibliographystyle{Biometrika}
\bibliography{mybib-1}

\begin{thebibliography}{55}
\expandafter\ifx\csname natexlab\endcsname\relax\def\natexlab#1{#1}\fi

\bibitem[{Acemoglu \& Johnson(2006)}]{acemoglu2006disease}
\textsc{Acemoglu, D.} \& \textsc{Johnson, S.} (2006).
\newblock Disease and development: the effect of life expectancy on economic
  growth.
\newblock Tech. rep., National Bureau of Economic Research.

\bibitem[{Antoniadis(1997)}]{antoniadis1997wavelets}
\textsc{Antoniadis, A.} (1997).
\newblock Wavelets in statistics: a review.
\newblock \textit{Statistical Methods and Applications} \textbf{6}, 97--130.

\bibitem[{Bang \& Robins(2005)}]{bang2005doubly}
\textsc{Bang, H.} \& \textsc{Robins, J.} (2005).
\newblock Doubly robust estimation in missing data and causal inference models.
\newblock \textit{Biometrics} \textbf{61}, 962--972.

\bibitem[{Belloni et~al.(2014)Belloni, Chernozhukov \&
  Hansen}]{belloni2014inference}
\textsc{Belloni, A.}, \textsc{Chernozhukov, V.} \& \textsc{Hansen, C.} (2014).
\newblock Inference on treatment effects after selection among high-dimensional
  controls.
\newblock \textit{The Review of Economic Studies} \textbf{81}, 608--650.

\bibitem[{Berk et~al.(2012)Berk, Brown, Buja, Zhang \& Zhao}]{berk2012valid}
\textsc{Berk, R.}, \textsc{Brown, L.}, \textsc{Buja, A.}, \textsc{Zhang, K.} \&
  \textsc{Zhao, L.} (2012).
\newblock Valid post-selection inference.
\newblock \textit{Submitted Ann. Statist. http} .

\bibitem[{Bickel et~al.(2009)Bickel, Ritov \&
  Tsybakov}]{bickel2009simultaneous}
\textsc{Bickel, P.~J.}, \textsc{Ritov, Y.} \& \textsc{Tsybakov, A.~B.} (2009).
\newblock Simultaneous analysis of lasso and {D}antzig selector.
\newblock \textit{The Annals of Statistics} , 1705--1732.

\bibitem[{Brookhart et~al.(2006a)Brookhart, Schneeweiss, Rothman, Glynn, Avorn
  \& Sturmer}]{brookhart2006variable}
\textsc{Brookhart, M.~A.}, \textsc{Schneeweiss, S.}, \textsc{Rothman, K.~J.},
  \textsc{Glynn, R.~J.}, \textsc{Avorn, J.} \& \textsc{Sturmer, T.} (2006a).
\newblock Variable selection for propensity score models.
\newblock \textit{American Journal of Epidemiology} \textbf{163}, 1149--1156.

\bibitem[{Brookhart \& van~der Laan(2006b)}]{brookhart2006semiparametric}
\textsc{Brookhart, M.~A.} \& \textsc{van~der Laan, M.~J.} (2006b).
\newblock A semiparametric model selection criterion with applications to the
  marginal structural model.
\newblock \textit{Computational Statistics \& Data Analysis} \textbf{50},
  475--498.

\bibitem[{Candes \& Tao(2007)}]{candes2007dantzig}
\textsc{Candes, E.} \& \textsc{Tao, T.} (2007).
\newblock The {D}antzig selector: Statistical estimation when $p$ is much
  larger than $n$.
\newblock \textit{The Annals of Statistics} , 2313--2351.

\bibitem[{Chatterjee \& Lahiri(2011)}]{chatterjee2011bootstrapping}
\textsc{Chatterjee, A.} \& \textsc{Lahiri, S.~N.} (2011).
\newblock Bootstrapping lasso estimators.
\newblock \textit{Journal of the American Statistical Association}
  \textbf{106}, 608--625.

\bibitem[{Crainiceanu et~al.(2008)Crainiceanu, Dominici \&
  Parmigiani}]{crainiceanu2008adjustment}
\textsc{Crainiceanu, C.}, \textsc{Dominici, F.} \& \textsc{Parmigiani, G.}
  (2008).
\newblock Adjustment uncertainty in effect estimation.
\newblock \textit{Biometrika} \textbf{95}, 635.

\bibitem[{Davidian et~al.(2005)Davidian, Tsiatis \&
  Leon}]{davidian2005semiparametric}
\textsc{Davidian, M.}, \textsc{Tsiatis, A.} \& \textsc{Leon, S.} (2005).
\newblock Semiparametric estimation of treatment effect in a pretest--posttest
  study with missing data.
\newblock \textit{Statistical Science} \textbf{20}, 261.

\bibitem[{De~Luna et~al.(2011)De~Luna, Waernbaum \&
  Richardson}]{de2011covariate}
\textsc{De~Luna, X.}, \textsc{Waernbaum, I.} \& \textsc{Richardson, T.} (2011).
\newblock Covariate selection for the nonparametric estimation of an average
  treatment effect.
\newblock \textit{Biometrika} \textbf{98}, 861--875.

\bibitem[{Doppelhofer et~al.(2003)Doppelhofer, Miller \& Sala-i
  Martin}]{doppelhofer2003determinants}
\textsc{Doppelhofer, G.}, \textsc{Miller, R.} \& \textsc{Sala-i Martin, X.}
  (2003).
\newblock Determinants of long-term growth: A {B}ayesian averaging of classical
  estimates ({BACE}) approach.
\newblock \textit{American Economic Review} .

\bibitem[{Doppelhofer \& Weeks(2009)}]{doppelhofer2009jointness}
\textsc{Doppelhofer, G.} \& \textsc{Weeks, M.} (2009).
\newblock Jointness of growth determinants.
\newblock \textit{Journal of Applied Econometrics} \textbf{24}, 209--244.

\bibitem[{Doppelhofer \& Weeks(2011)}]{doppelhofer2011robust}
\textsc{Doppelhofer, G.} \& \textsc{Weeks, M.} (2011).
\newblock Robust growth determinants.
\newblock Tech. rep., CESifo working paper: Fiscal Policy, Macroeconomics and
  Growth.

\bibitem[{Eicher et~al.(2011)Eicher, Papageorgiou \&
  Raftery}]{eicher2011default}
\textsc{Eicher, T.~S.}, \textsc{Papageorgiou, C.} \& \textsc{Raftery, A.~E.}
  (2011).
\newblock Default priors and predictive performance in bayesian model
  averaging, with application to growth determinants.
\newblock \textit{Journal of Applied Econometrics} \textbf{26}, 30--55.

\bibitem[{Fan \& Li(2001)}]{fan2001variable}
\textsc{Fan, J.} \& \textsc{Li, R.} (2001).
\newblock Variable selection via nonconcave penalized likelihood and its oracle
  properties.
\newblock \textit{Journal of the American Statistical Association} \textbf{96},
  1348--1261.

\bibitem[{Greenland(2008)}]{greenland2008invited}
\textsc{Greenland, S.} (2008).
\newblock Invited commentary: variable selection versus shrinkage in the
  control of multiple confounders.
\newblock \textit{American Journal of Epidemiology} \textbf{167}, 523.

\bibitem[{Husain(2012)}]{husain2012alternative}
\textsc{Husain, M.~J.} (2012).
\newblock Alternative estimates of the effect of the increase of life
  expectancy on economic growth.
\newblock \textit{Economics Bulletin} \textbf{32}, 3025--3035.

\bibitem[{Kang \& Schafer(2007)}]{kang2007demystifying}
\textsc{Kang, J.} \& \textsc{Schafer, J.} (2007).
\newblock Demystifying double robustness: a comparison of alternative
  strategies for estimating a population mean from incomplete data.
\newblock \textit{Statistical {S}cience} \textbf{22}, 523--539.

\bibitem[{Lee et~al.(2013)Lee, Sun, Sun \& Taylor}]{lee2013exact}
\textsc{Lee, J.~D.}, \textsc{Sun, D.~L.}, \textsc{Sun, Y.} \& \textsc{Taylor,
  J.~E.} (2013).
\newblock Exact post-selection inference with the lasso.
\newblock \textit{arXiv preprint arXiv:1311.6238} .

\bibitem[{Lee et~al.(2015)Lee, Sun, Taylor et~al.}]{lee2015model}
\textsc{Lee, J.~D.}, \textsc{Sun, Y.}, \textsc{Taylor, J.~E.} et~al. (2015).
\newblock On model selection consistency of regularized m-estimators.
\newblock \textit{Electronic Journal of Statistics} \textbf{9}, 608--642.

\bibitem[{Leeb \& P{\"o}tscher(2005)}]{leeb2005model}
\textsc{Leeb, H.} \& \textsc{P{\"o}tscher, B.} (2005).
\newblock Model selection and inference: Facts and fiction.
\newblock \textit{Econometric Theory} \textbf{21}, 21--59.

\bibitem[{Leeb \& P{\"o}tscher(2008)}]{leeb2008sparse}
\textsc{Leeb, H.} \& \textsc{P{\"o}tscher, B.} (2008).
\newblock Sparse estimators and the oracle property, or the return of {H}odges'
  estimator.
\newblock \textit{Journal of Econometrics} \textbf{142}, 201--211.

\bibitem[{Ley \& Steel(2007)}]{ley2007jointness}
\textsc{Ley, E.} \& \textsc{Steel, M.~F.} (2007).
\newblock Jointness in bayesian variable selection with applications to growth
  regression.
\newblock \textit{Journal of Macroeconomics} \textbf{29}, 476--493.

\bibitem[{Ley \& Steel(2009{\natexlab{a}})}]{ley2009comments}
\textsc{Ley, E.} \& \textsc{Steel, M.~F.} (2009{\natexlab{a}}).
\newblock Comments on jointness of growth determinants.
\newblock \textit{Journal of Applied Econometrics} \textbf{24}, 248--251.

\bibitem[{Ley \& Steel(2009{\natexlab{b}})}]{ley2009effect}
\textsc{Ley, E.} \& \textsc{Steel, M.~F.} (2009{\natexlab{b}}).
\newblock On the effect of prior assumptions in bayesian model averaging with
  applications to growth regression.
\newblock \textit{Journal of applied econometrics} \textbf{24}, 651--674.

\bibitem[{Magnus et~al.(2010)Magnus, Powell \&
  Pr{\"u}fer}]{magnus2010comparison}
\textsc{Magnus, J.~R.}, \textsc{Powell, O.} \& \textsc{Pr{\"u}fer, P.} (2010).
\newblock A comparison of two model averaging techniques with an application to
  growth empirics.
\newblock \textit{Journal of Econometrics} \textbf{154}, 139--153.

\bibitem[{Negahban et~al.(2009)Negahban, Ravikumar, Wainwright, Yu
  et~al.}]{negahban2009unified}
\textsc{Negahban, S.}, \textsc{Ravikumar, P.~D.}, \textsc{Wainwright, M.~J.},
  \textsc{Yu, B.} et~al. (2009).
\newblock A unified framework for high-dimensional analysis of m-estimators
  with decomposable regularizers.
\newblock In \textit{NIPS}.

\bibitem[{Neugebauer \& van~der Laan(2005)}]{neugebauer2005prefer}
\textsc{Neugebauer, R.} \& \textsc{van~der Laan, M.} (2005).
\newblock Why prefer double robust estimators in causal inference?
\newblock \textit{Journal of Statistical Planning and Inference} \textbf{129},
  405--426.

\bibitem[{Robins(2000)}]{robins1999robust}
\textsc{Robins, J.~M.} (2000).
\newblock {Robust estimation in sequentially ignorable missing data and causal
  inference models}.
\newblock In \textit{Proceedings of the American Statistical Association
  Section on Bayesian Statistical Science, 1999}.

\bibitem[{Robins \& Brumback(2000)}]{robins2000marginal}
\textsc{Robins, J.~M.} \& \textsc{Brumback, B.} (2000).
\newblock Marginal structural models and causal inference in epidemiology.
\newblock \textit{Epidemiology} \textbf{11}, 550--560.

\bibitem[{Robins et~al.(1992)Robins, Mark \& Newey}]{RobinsMarkNewey1992}
\textsc{Robins, J.~M.}, \textsc{Mark, S.~D.} \& \textsc{Newey, W.~K.} (1992).
\newblock Estimating exposure effects by modelling the expectation of exposure
  conditional on confounders.
\newblock \textit{Biometrics} \textbf{48}, 479--495.

\bibitem[{Rosenbaum(2010)}]{rosenbaum2010causal}
\textsc{Rosenbaum, P.} (2010).
\newblock Causal inference in randomized experiments.
\newblock \textit{Design of Observational Studies} , 21--63.

\bibitem[{Rosenbaum \& Rubin(1983)}]{rosenbaum1983central}
\textsc{Rosenbaum, P.~R.} \& \textsc{Rubin, D.~B.} (1983).
\newblock {The central role of the propensity score in observational studies
  for causal effects}.
\newblock \textit{Biometrika} \textbf{70}, 41--55.

\bibitem[{Rubin(2008)}]{rubin2008objective}
\textsc{Rubin, D.~B.} (2008).
\newblock For objective causal inference, design trumps analysis.
\newblock \textit{The Annals of Applied Statistics} \textbf{2}, 808--840.

\bibitem[{Schafer \& Kang(2005)}]{Schafer2005}
\textsc{Schafer, J.~L.} \& \textsc{Kang, J. D.~Y.} (2005).
\newblock Discussion of ``semi-parametric estimation of treatment effect in a
  pretest--postest study with missing data'' by {M. Davidian} et al.
\newblock \textit{Statistical {S}cience} \textbf{20}, 292---295.

\bibitem[{Schisterman et~al.(2009)Schisterman, Cole \&
  Platt}]{schisterman2009overadjustment}
\textsc{Schisterman, E.~F.}, \textsc{Cole, S.} \& \textsc{Platt, R.~W.} (2009).
\newblock Overadjustment bias and unnecessary adjustment in epidemiologic
  studies.
\newblock \textit{Epidemiology} \textbf{20}, 488.

\bibitem[{Sinisi et~al.(2007)Sinisi, Polley, Petersen, Rhee \& Van
  Der~Laan}]{sinisi2007super}
\textsc{Sinisi, S.}, \textsc{Polley, E.}, \textsc{Petersen, M.}, \textsc{Rhee,
  S.} \& \textsc{Van Der~Laan, M.} (2007).
\newblock Super learning: an application to the prediction of {HIV}-1 drug
  resistance.
\newblock \textit{Statistical applications in genetics and molecular biology}
  \textbf{6}, 7.

\bibitem[{Taylor et~al.(2014)Taylor, Lockhart, Tibshirani \&
  Tibshirani}]{taylor2014exact}
\textsc{Taylor, J.}, \textsc{Lockhart, R.}, \textsc{Tibshirani, R.~J.} \&
  \textsc{Tibshirani, R.} (2014).
\newblock Exact post-selection inference for forward stepwise and least angle
  regression.
\newblock \textit{arXiv preprint arXiv:1401.3889} .

\bibitem[{Taylor \& Tibshirani(2015)}]{taylor2015statistical}
\textsc{Taylor, J.} \& \textsc{Tibshirani, R.~J.} (2015).
\newblock Statistical learning and selective inference.
\newblock \textit{Proceedings of the National Academy of Sciences}
  \textbf{112}, 7629--7634.

\bibitem[{Tian \& Taylor(2015)}]{tian2015selective}
\textsc{Tian, X.} \& \textsc{Taylor, J.~E.} (2015).
\newblock Selective inference with a randomized response.
\newblock \textit{arXiv preprint arXiv:1507.06739} .

\bibitem[{Tibshirani(1996)}]{tibshirani1996regression}
\textsc{Tibshirani, R.} (1996).
\newblock Regression shrinkage and selection via the lasso.
\newblock \textit{Journal of the Royal Statistical Society, Series B}
  \textbf{58}, 267--288.

\bibitem[{Tibshirani et~al.(2014)Tibshirani, Taylor, Lockhart \&
  Tibshirani}]{tibshirani2014exact}
\textsc{Tibshirani, R.}, \textsc{Taylor, J.}, \textsc{Lockhart, R.} \&
  \textsc{Tibshirani, R.} (2014).
\newblock Exact post-selection inference for sequential regression procedures.
\newblock \textit{arXiv preprint arXiv:1401.3889} .

\bibitem[{Tsiatis(2006)}]{tsiatis2006semiparametric}
\textsc{Tsiatis, A.~A.} (2006).
\newblock \textit{{Semiparametric theory and missing data}}.
\newblock Springer Verlag.

\bibitem[{Van~der Laan et~al.(2004)Van~der Laan, Dudoit \& Van~der
  Vaart}]{van2004cross}
\textsc{Van~der Laan, M.}, \textsc{Dudoit, S.} \& \textsc{Van~der Vaart, A.}
  (2004).
\newblock The cross-validated adaptive epsilon-net estimator.
\newblock \textit{UC Berkeley Division of Biostatistics Working Paper Series} ,
  142.

\bibitem[{Van~der Laan et~al.(2007)Van~der Laan, Polley \&
  Hubbard}]{van2007super}
\textsc{Van~der Laan, M.}, \textsc{Polley, E.} \& \textsc{Hubbard, A.} (2007).
\newblock Super learner.
\newblock \textit{Statistical Applications in Genetics and Molecular Biology}
  \textbf{6}, 25.

\bibitem[{van~der Laan \& Robins(2003)}]{van2003unified}
\textsc{van~der Laan, M.} \& \textsc{Robins, J.} (2003).
\newblock \textit{Unified methods for censored longitudinal data and
  causality}.
\newblock Springer Verlag.

\bibitem[{Vansteelandt et~al.(2010)Vansteelandt, Bekaert \&
  Claeskens}]{vansteelandt2010model}
\textsc{Vansteelandt, S.}, \textsc{Bekaert, M.} \& \textsc{Claeskens, G.}
  (2010).
\newblock On model selection and model misspecification in causal inference.
\newblock \textit{Statistical Methods in Medical Research} , 1477--0334.

\bibitem[{Wang et~al.(2012)Wang, Parmigiani \& Dominici}]{wang2012bayesian}
\textsc{Wang, C.}, \textsc{Parmigiani, G.} \& \textsc{Dominici, F.} (2012).
\newblock Bayesian effect estimation accounting for adjustment uncertainty.
\newblock \textit{Biometrics} \textbf{68}, 661--671.

\bibitem[{Wilson \& Reich(2014)}]{wilson2014confounder}
\textsc{Wilson, A.} \& \textsc{Reich, B.~J.} (2014).
\newblock Confounder selection via penalized credible regions.
\newblock \textit{Biometrics} .

\bibitem[{Zigler et~al.(2013)Zigler, Watts, Yeh, Wang, Coull \&
  Dominici}]{zigler2013model}
\textsc{Zigler, C.~M.}, \textsc{Watts, K.}, \textsc{Yeh, R.~W.}, \textsc{Wang,
  Y.}, \textsc{Coull, B.~A.} \& \textsc{Dominici, F.} (2013).
\newblock Model feedback in {B}ayesian propensity score estimation.
\newblock \textit{Biometrics} .

\bibitem[{Zou(2006)}]{zou2006adaptive}
\textsc{Zou, H.} (2006).
\newblock The adaptive lasso and its oracle properties.
\newblock \textit{Journal of the American Statistical Association}
  \textbf{101}, 1418--1429.

\bibitem[{Zou \& Hastie(2005)}]{zou2005regularization}
\textsc{Zou, H.} \& \textsc{Hastie, T.} (2005).
\newblock Regularization and variable selection via the elastic net.
\newblock \textit{Journal of the Royal Statistical Society, Series B}
  \textbf{67}, 301--320.

\end{thebibliography}

\end{document}